\pgfplotsset{compat=1.17,
	legend image code/.code={
		\draw[mark repeat=2,mark phase=2]
		plot coordinates {
			(0cm,0cm)
			(0.15cm,0cm)        
			(0.3cm,0cm)         
		};%
}}
\DeclareMathOperator*{\argmax}{arg\,max}
\def\checkmark{\tikz\fill[scale=0.4](0,.35) -- (.25,0) -- (1,.7) -- (.25,.15) -- cycle;} \newcommand{\jospin}{\footnotesize{\color{magenta}{LJ}}}
\newcommand{\chirac}{\footnotesize{\color{blue}{JC}}}
\newcommand{\macron}{\footnotesize{\color{orange}{EM}}}
\newcommand{\bayrou}{\footnotesize{\color{cyan}{FB}}}
\newcommand{\meluche}{\footnotesize{\color{red}{JLM}}}
\newcommand{\green}{\footnotesize{\color{olive}{NM}}}
\newcommand{\hamon}{\footnotesize{\color{magenta}{BH}}}
\newcommand{\fillon}{\footnotesize{\color{blue}{FF}}}
\newcommand{\lepen}{\footnotesize{\color{black}{MLP}}}
\def\mavr{{\rm MAV}^R}
\def\ccav{{\rm CCAV}}
\def\ccavr{{\rm CCAV}^R}
\def\sccav{\mbox{\rm S-CCAV}}
\def\sccavr{\mbox{\rm S-CCAV}^R}
\newtheorem{definition}{Definition}
\newtheorem{theorem}{Theorem}
\newtheorem{proposition}{Proposition}
\newtheorem{example}{Example}
\newcommand*{\drule}[2]{\textbf{#1}: \emph{#2}}
\title{Approval with Runoff}
\author{
    Théo Delemazure \textsuperscript{\rm (1)}, 
    Jérôme Lang \textsuperscript{\rm (2)},
    Jean-François Laslier \textsuperscript{\rm (3)},
    Remzi Sanver\textsuperscript{\rm (2)}\\
    \textsuperscript{\rm (1)} LAMSADE, Université Paris Dauphine, PSL, CNRS\\
    \textsuperscript{\rm (2)} LAMSADE, CNRS, Université Paris Dauphine, PSL\\
    \textsuperscript{\rm (3)} CNRS, Paris School of Economics, PSL
}
\begin{document}

\maketitle
\begin{abstract}
We define a family of runoff rules that work as follows: voters cast approval ballots over candidates; two finalists are selected; and the winner is decided by majority. With approval-type ballots, there are various ways to select the finalists. We leverage known approval-based committee rules and study the obtained runoff rules from an axiomatic point of view. Then we analyze the outcome of these rules on single-peaked profiles, and on real data.
\end{abstract}

\section{Introduction}


Plurality with runoff (also known as runoff voting) is a widely used single-winner voting rule, in fact the most common rule for presidential elections throughout the world\footnote{See \url{https://en.wikipedia.org/wiki/Two-round_system}.}. But the social choice literature has pointed out that plurality with runoff suffers from so many drawbacks that we may wonder why it is used at all: it is highly sensitive to cloning, fails monotonicity, reinforcement, participation, Condorcet-consistency, and is very easy to manipulate. In particular, its high sensitivity to cloning has a number of derived effects before the vote (at the level of the determination of candidates) and at voting time (with massive strategic voting of a specific kind, named ``useful voting''). Perhaps the main reason why it is so widely used after all is related to the fact that runoff voting is not used as a one-shot voting rule but as two-round protocol: voters are called to urns for the first round, the results are made public,  and then some amount of time passes (typically one or two weeks). And in between the two rounds, many things happen. 

In most variants, only two candidates are selected for the runoff. The others candidates may negotiate their support to one of the two contenders, leading to adjustments in the platforms proposed in the second round. The TV debates that take place between the two finalists at that point in time are considered as the most important moment in the whole campaign, and many voters may, during this period, review their decision to participate or not to the second vote. For all these reasons, the existence of two rounds of vote separate in time is considered to be crucial for the voters' information.

Are the informational benefits of a runoff protocol enough to overcome its numerous theoretical drawbacks? There can be diverse opinions about this. However, instead of answering this question, we may ask another one: is it possible to keep the nice benefit of the two-round protocol without having to bear all the drawbacks of \textit{Plurality} at the first round? 

Clearly, if the answer to this question is positive, the format of the ballots at first round must no longer be uninominal. Several possibilities exist: ordinal ballots, cardinal ballots, or more simply, approval ballots. Approval ballots have several advantages; to start with, they are simple and easy to express.

In this paper we explore this possibility seriously.
We define an approval-with-runoff election as a two-round protocol:
\begin{enumerate}
    \item First round: voters cast approval ballots, from which the
    two finalists are selected.
    \item Second round: voters cast votes for one of the two finalists, and the majority winner wins the overall election.\footnote{
    The present paper is concerned with single-winner elections. Approval voting with a runoff is effectively used in several cantons in Switzerland for multi-winner elections. 
    The precise rules vary from one canton to the other so that the second round is sometimes almost unused, as in the canton of Zurich (\citep{LaslierVDS16}, \citep{VanderStraetenetal18}). }
\end{enumerate}

Formally, we define approval-with-runoff as a {\em voting rule}, with a one-shot input, and study its properties in a similar way as we would study the properties of plurality with runoff. 
Then two major questions arise:
\begin{enumerate}
    \item What should the input of the rule consist of?
    \item Which rule should be used to determine the two finalists?
\end{enumerate}

For question 1, the answer becomes clear once we remark that we need the approval data for computing the finalists, and the pairwise comparisons between candidates for computing the final winner. Of course, we will not need {\em all} comparisons between arbitrary pairs of candidates; but just as plurality with runoff, seen as a voting rule, takes full rankings as input although most of this information will not be asked, here too,  we need more information in the input than we will ask voters, and the normative properties of the rule will be evaluated with respect to this (mostly private) information. Now, requiring pairwise comparisons between all pairs of candidates just means that we need each voter's ranking of candidates, and requiring her approval set means that this ranking comes with a threshold that separates approved candidates from disapproved candidates. This data structure is called an {\em approval-preference (AP)} profile \citep{BramsS09}.

Notice that, with respect to the points mentioned in the introduction, the framework that we use does not allow taking into account the evolution of voters and candidates in between the two rounds. We leave these problems to further research and, in this paper as it is often the case in social choice theory, we concentrate on the counting of sincere ballots cast by a fixed electorate. 

For question 2, things are more complex because there is not a unique way to select two candidates from approval ballots. 
The general setting in which we select $k$ candidates (here, $k = 2$) from an approval profile is called an {\em approval-based committee rule} (ABC rule).
A recent and extensive survey is on ABC rules is \cite{LacknerS20}, and we have now series of results that tell us which properties these various rules satisfy and for which contexts they are suitable. 
Most importantly, the choice of the rule used for the first round has strong implications about the very nature of the two-stage rule, both from a normative point of view and from a political science point of view: should we send to the second round the most two approved candidates? Or should we offer the voters two candidates that are diverse enough? Should we pay attention to proportionality issues? Should we guarantee the most approved candidate is among the two finalists? 

Our primary aim is {\em to define approval with runoff} not just as one rule but {\em as a family of rules}, and {\em to explore the reasons that may guide us towards the choice of one of the rules in the family}. 


The paper is organized as follows.
We start by related work (Section \ref{related}). We define and study the family of Approval-based committee rules (Section \ref{abc-defin}) together with a selection of meaningful rules. Then, we define Approval with runoff rules and study these rules form an axiomatic point of view (Section \ref{avr-defin}). We analyse the outcome of these rules on one-dimensional Euclidean profiles (Section \ref{sp}), and move on to applying the rules on real data (Section \ref{experiments}). We conclude in
Section \ref{conclu}.

\section{Related work}\label{related}

Approval with Runoff was first introduced in \citep{remzi2010} and compared to other rules based on approval-preference profiles.
\citep{Green-ArmytageT20}
consider plurality with runoff together with eight other runoff rules for selecting the finalists, with varying input formats (ordinal, approval, numerical), including Approval with Runoff.
Voters are supposed to vote sincerely and, for Approval voting, to approve a candidate if and only if the utility they give to this candidate is larger than the average utility of all candidates running. They evaluate these rules along four numerical criteria (expected utility of winner, of the runoff loser, representativeness, resistance to strategy); numerical results come both from using real data and from simulations. Among other conclusions, plurality with runoff scores particularly bad, and approval with runoff, slightly better, although it is beaten by plurality with runoff on two criteria: representativeness and resistance to strategy. 

A runoff can also be seen as an extreme case of shortlisting (with at most two selected candidates). Using approval for shortlisting candidates was studied recently in \citep{LacknerM21}; a crucial difference with runoff rules is that shortlisting does not impose constraints on the number of selected candidates, which leads to very different rules (such as dichotomy rules, or rules based on large gaps).

Approval-based multiwinner election rules have received enormous attention these last ten years: see \citep{LacknerS20} for a review.
They are clustered in several groups according to the objective of the selection: excellence (select the individually best $k$ candidates), proportional representation (ensure that each coherent group of voters is represented in the selection, proportionally to its size), or diversity (output a diverse set of candidates, avoid similar candidates in the selection).
Which of these three clusters of rules suits the selection of  runoff candidates better is not clear at this point; our paper aims at answering (at least partly) this question. 

Defining voting rules that take as input approval-preference profiles has been initiated in \citep{BramsS09}, who propose and study two such rules (preference approval voting and fallback voting) that have been studied in a number of subsequent works, from the point of view of axiomatization, computation, resistance to strategic behaviour; as far as we know, they have not been studied in the context of runoffs.

\section{Approval-based committee rules}\label{abc-defin}

\par 
Let $\mathcal C = \{c_1, \dots, c_m\}$ be a set of $m$ candidates and $\mathcal V = \{ v_1, \dots, v_n\}$ a set of $n$ voters. An approval profile is a collection of approval ballots $V = \left \langle A_1, \dots, A_n  \right \rangle$ with $A_i \subseteq \mathcal C$ for all $i$. An Approval-based committee rule (ABC rule) $F$ is a rule that takes as input an approval profile $V$ and a committee size $k \ge 1$ and return a set of winning committees $W \subseteq \mathcal C$ of size $|W| = k$.
\par 
For an approval profile $V$ we denote $S_V(c) = \left | \{i | c \in A_i \} \right |$ the approval score of a candidate $c \in \mathcal C$. By extension, the approval score of a set of candidates $J \subseteq \mathcal C$ is the number of approval ballots that contains all candidates from the set $J$, $S_V(J) = \left | \{i | J \subseteq A_i \} \right |$. For simplicity, we write sets on a simpler form, e.g. $S_V(abc)$ instead of $S_V(\{a,b,c\})$. We call \emph{approval winners} the candidates that maximize $S_V$, that is the winners of standard (single-winner) approval voting.

\subsection{Rules}\label{sec:rules}
\par 
As we explained in the introduction, an approval with runoff (AVR) rule use an ABC rule to select two finalists and then return the majority winner between the two finalists. \citep{LacknerS20} did an extensive study of these ABC rules. Because it is the case that interests us, in this section we define rules in the case of a committee size of $k=2$. We will see that most of these rules can be defined with a simple formula for this particular case.
\par 


The most intuitive rule is probably the one that selects the candidates with the highest approval scores:
\smallskip

\drule{Multi-Winner Approval Voting (MAV)}{
$$MAV(V) = \argmax_{x_1, x_2 \in \mathcal C} S_V(x_1) + S_V(x_2)$$
}

\par 
Some rules discount the satisfaction of
voters who are already satisfied by one of the two finalists. This is the case of Proportional Approval Voting (PAV) and Approval Chamberlin Courant (CCAV). In PAV, a voter approving $j$ candidates of the committee $W$ gives a score $s= \sum_{i=1}^{j} \frac{1}{j}$ to the committee. For a committee of size $k =2$, this means that a voter approving one candidate give a score of $1$, and a voter approving both candidates a score of $\frac{3}{2}$. If we do $S_V(x_1) + S_V(x_2)$ as in MAV, voters approving both candidate give a score of $2$ instead of $\frac{3}{2}$, so we have to discard a score of $\frac{1}{2}$ for each of these voters. In CCAV, a voter approving one or more candidates of the committee gives a score of $1$ to the committee. Therefore, in comparison to MAV, we have to discard $1$ on the score given by voters approving both candidates. This is why in the case $k=2$, we can write these rules like this:

\smallskip

\drule{Proportional Approval Voting (PAV)}{
$$PAV(V) = \argmax_{x_1, x_2 \in \mathcal C} S_V(x_1) + S_V(x_2) - \frac{1}{2}S_V(x_1x_2)$$}

\drule{Approval Chamberlin Courant (CCAV)}{
$$CCAV(V) = \argmax_{x_1, x_2 \in \mathcal C} S_V(x_1) + S_V(x_2) - S_V(x_1x_2)$$
}

\par 
These rules select the pairs of candidates $\{x_1, x_2\}$ maximizing $S_V(x_1) + S_V(x_2) - \alpha S_V(x_1x_2)$ for some $\alpha \in [0,1]$. This $\alpha$ is equal to $0$ for MAV, to $1/2$ for PAV and to $1$ for CCAV. We call these rules $\alpha$-AV rules. There also exists sequential versions of these rules. In these sequential versions, the first finalist is always an approval winner.
\smallskip

\drule{Sequential Proportional Approval Voting (S-PAV)}{The rule
chooses the pairs $\{x_1, x_2\}$ such that $x_1$ maximizes $S_V(x_1)$ and $x_2$ maximizes
$S_V(x_2) - \frac{1}{2}S_V(x_1x_2)$.
}
\smallskip

\drule{Sequential Approval Chamberlin Courant (S-CCAV)}{The rule chooses the pairs $\{x_1, x_2\}$ such that $x_1$ maximizes $S_V(x_1)$, and $x_2$ maximizes $S_V(x_2) - S_V(x_1x_2)$.
}
\par 
As before, these definitions of S-PAV and S-CCAV for the case $k=2$ are equivalent to the definitions in the general case. In S-PAV, every voter that approves the first finalist $x_1$ now has a weight of $\frac{1}{2}$, and in S-CCAV, it has a weight of $0$.
Note that sequential MAV would be equivalent to standard MAV. For these sequential rules, the
first finalist is an approval winner $x_1$, and the second finalist maximizes the value $S_V(x_2) - \alpha S_V(x_1x_2)$ for some $\alpha \in [0,1]$. We call these rules $\alpha$-seqAV rules.

\par 
A rule that almost falls into this family is the Eneström Phragmen rule. This rule is also sequential, as the first finalist $x_1$ is an approval winner. Then, the weight of voters approving $x_1$ is reduced to $\max(0, 1-\frac{Q}{S_V(x_1)})$ where $Q$ corresponds to some quota $Q \in [0, n]$. Most of the time, we use Droop quota $Q = n/(k+1)$ which is $n/3$ in our case, or Hare quota $Q= n/k$, equal to $n/2$ in our case. Thus, in the case $k=2$, we can define this rule as follows:

\smallskip

\drule{Eneström Phragmen (EnePhr) }{The rule chooses the pairs $\{x_1, x_2\}$ such that $x_1$ maximizes $S_V(x_1)$ and $x_2$ maximizes $S_V(x_2) - \min(1, \frac{Q}{S_V(x_1)})S_V(x_1x_2)$ for some quota $Q \in [0, n]$.
}

\smallskip
This gives us a $\alpha$-seqAV rule dependent on $Q$, with $\alpha = \min(1, \frac{Q}{S_V(x_1)})$.
\par 
Another popular rule to obtain proportionality in approval based committee selection is the sequential Phragmen rule.

\drule{Sequential Phragmen (S-Phr)}{The rule chooses the pairs $\{x_1, x_2\}$ such that $x_1$ maximizes $S_V(x_1)$ and $x_2$ minimizes
\[
\frac{1 + \frac{S_V(x_1,x_2)}{S_V(x_1)}}{S_V(x_2)}
\]
}
\par
To see how we obtain this simple formulation, we have to use the discrete formulation of S-Phr presented in \citep{LacknerS20}. The load of every voter is initialized at $l_i = 0$. The first finalist selected $x_1$ is an approval winner. Then every voter that approves $x_1$ get a load of $l_i = \frac{1}{S_V(x_1)}$. The second finalist is the candidate $x_2$ minizing:
\[
\frac{1 + \sum_{x_2 \in A_i} l_i}{S_V(x_2)} = \frac{1 + \sum_{x_1, x_2 \in A_i} \frac{1}{S_V(x_1)}}{S_V(x_2)} = \frac{1 + \frac{S_V(x_1x_2)}{S_V(x_1)}}{S_V(x_2)}
\]
\par 
In the next rule, every voter has a weight of $1$ and splits its weight between the candidates he approves:

\drule{Splitted Approval Voting (SAV)}{The rule chooses the pairs $\{x_1, x_2\}$ such that $x_1$ and $x_2$ maximize the splitted approval score $Sp_V(x)$ with
\[
Sp_V(x) = \sum_{i : x\in A_i} \frac{1}{|A_i|}
\]
}
That means that each voter gives the same fraction of vote to each candidate he supports.
\par

\begin{example}\label{ex:ex1}
\begin{table}[!t]
    \centering
    \begin{tabular}{|c|c|c|c|c|c|c|} \hline
    Rule & MAV & (S-)PAV & (S-)CCAV & S-Phr & SAV  \\\hline
     $\{a,b\}$  & $12 + 10 =$\textbf{22} & $12 + 10 - \frac{10}{2} =$ 17 & $12 + 10 - 10 =$ 10 &
     $\frac{1 + 10/12}{10} =$ 22/120 & $2 + 6 + 4\frac{2}{3} =$\textbf{10.7}\\
     $\{a, c\}$ & $12 + 8 =$ 20 & $12 + 8 - \frac{4}{2} =$ \textbf{18} & $12 + 8 - 4 =$ 16 & $ \frac{1 + 4/12}{8}$ = \textbf{16/96} & $2 + 6\frac{1}{2} + 4\frac{2}{3} + 4\frac{1}{2} =$ 9.7\\
     $\{a, d\}$ & $12 + 5 =$ 17 & $12 + 5 - \frac{0}{2} =$ 17 & $12 + 5 - 0 =$ \textbf{17} & $ \frac{1 + 0/12}{5}$ = 1/5 & $2 + 10\frac{1}{2} + 4\frac{1}{2} + 1 =$ 10\\\hline
    \end{tabular}
    \caption{Score of the different committees in Example \ref{ex:ex1}}
    \label{tab:ex1}
\end{table}
Let $V = (2 \times a, 6 \times ab, 4 \times abc, 4 \times cd, 1 \times d)$, i.e., two ballots $\{a\}$, six $\{a,b\}$ etc. Table \ref{tab:ex1} summarizes the score of the three main committees $\{a,b\}$, $\{a,c\}$ and $\{a,d\}$ for the different voting rules. For sequential rule S-PAV and S-CCAV, they are here equivalent to their non-sequential versions, because the approval winner is $a$. For Eneström Phragmen rule, with the Droop quota $Q = n/(k+1) = 17/3$, we have $\alpha = \frac{Q}{S_V(a)} = \frac{17}{3 \times 12} = 17/36$ which is very close to $1/2$, thus the results will be similar to S-PAV. The interesting case is with Hare quota $Q = n/2 = 17/2$, we have $\alpha = \frac{Q}{S_V(a)} = \frac{17}{2 \times 12} = 17/24$. This gives the score of $12 + 8 - 4\frac{17}{24} = $ 17.2 for $\{a,c\}$ and $12 + 5 - 0 =$ 17 for $\{a,d\}$. Thus, $\{a,c\}$ is also the pair of finalists with Hare quota.
\end{example}

We also need the rule that returns all pairs of candidates:\smallskip

\drule{Trivial Approval Voting (TRIV)}{\\
\centerline{$TRIV(V) = \{ \{x,x'\} \mid x,x' \in C, x \ne x'\}$}
}
\par
Note that the trivial approval rule with runoff is actually not completely trivial: it outputs all candidates except the Condorcet loser whenever there is one.

\subsection{Axioms}

\par
\cite{LacknerS20} already did an extensive study of these approval based committee rules. We only complete it here by adding one property that is very interesting for us: \emph{favorite-consistency}. It says that at least one of the candidates in the committee should be an approval winner. This properties is important because it is hard for voters to accept a voting rule in which the candidate with the highest amount of approval is not sure to go to the second round. This property and all the properties of this section are defined for every committee size $k \ge 1$.
\begin{definition}
An ABC rule $F$ is said to be \textbf{favorite-consistent} if every winning committee contains an approval winner, i.e. for all $ W \in F(V)$, we have $W \cap \argmax_{c \in \mathcal C} S_V(c) \ne \emptyset$.
\end{definition}
\par 
Among the voting rules considered in Section \ref{sec:rules}, only sequential rules satisfy this property.

\section{Approval with Runoff}\label{avr-defin}

\subsection{The model}
\par 
We already define approval profiles as collections of approval ballots. An ordinal preference profile is a collection of rankings $\succ = \left < \succ_1, \dots, \succ_n \right >$, where $\succ_i$ is the preference ranking of voter $i$ over $\mathcal C$. An \emph{approval-preference} profile is a collection of pairs $P = \left \langle (A_1, \succ_1), \dots (A_n, \succ_n) \right \rangle$ where $V_P = \left \langle A_1, \dots, A_n \right \rangle$ is an approval profile and $\succ = \left \langle \succ_1, \dots, \succ_n \right \rangle$ an ordinal preference profile.
We also note $P = (V_P, \succ)$. 
\par 
In this section, we assume {\em ballot consistency}:
voter $v_i$ has a threshold in her ranking $\succ_i$ such that every candidate above the threshold is approved and every candidate below is not; formally, $a \succ_i b$ holds for all $a \in A_i$ and $b \not \in A_i$. Ballot consistency allows us to use the following notation \citep{BramsS09}: $x_1 x_2 \ldots x_j | x_{j+1} \ldots x_m$ represents $(\succ_i, A_i)$ with $x_1 \succ_i x_2 \succ_i  \ldots \succ_i x_m$ and $A_i = \{x_1, \ldots, x_j\}$.
\footnote{Ballot consistency does not necessarily hold if voters are strategic and cast insincere approval ballots. Most results in the paper still hold without assuming ballot consistency.}
\par 
Given an ordinal preference profile $\succ$, $maj(\succ, \{a,b\})$ is defined as the set of winners of the majority vote between $a$ and $b$ (which is a singleton except in the case of a tie).
\par 
We now define the family of approval with runoff (AVR) rules. The idea is that we use the approval ballots in the first round to select two finalists, and the second round consists in a majority vote between the two selected candidates. Let $F$ be an (irresolute) approval-based 2-committee rule, i.e. a function that takes as input an approval profile and returns a nonempty set of pairs of candidates. Then, $F^R$ is the (irresolute) AVR rule such that we conduct the majority rule on every pair of finalists selected by $F$. Formally:
\[
F^R(V, \succ) = \bigcup_{\{x,y\} \in F(V)} maj(\succ,\{x,y\}) 
\]





\begin{example}
We continue Example \ref{ex:ex1} where $V = (2 \times a, 6 \times ab, 4 \times abc, 4 \times cd, 1 \times d)$. We can define the approval-preference profile $P = (V, \succ) = (2 \times a|bcd, 3 \times ba|dc, 3 \times ab|dc, 4 \times bac|d, 2 \times cd|ba 2 \times dc|ba, 1 \times d|bac)$, i.e. 2 voters approving $\{a\}$ with ranking $a > b > c > d$, 3 voters approving $\{b,a\}$ with ranking $b >a > d > c$, etc. In this profile $maj(\succ, \{a,b\}) = b$, $maj(\succ, \{a,c\}) = a$ and $maj(\succ, \{a,d\}) = a$. Thus, for $F$ equal to MAV and SAV, $F^R(P) = maj(\succ, \{a,b\}) = b$, for $F$ equal to (S-)PAV, S-Phr and EnePhr, $F^R(P) = maj(\succ, \{a,c\}) = a$ and for $F$ equal to (S-)CCAV, $F^R(P) = maj(\succ, \{a,d\}) = a$.
\end{example}

\subsection{Axiomatic analysis}
\label{normative}

In this section, we study the axiomatic properties of AVR rules. 
\par

As for ABC rules, an AVR rule $F^R$ is {\em anonymous} if it is invariant by any permutation of the voters, and {\em neutral} if for any permutation of the candidates $\pi$ and every profile $P$, $F^R(\pi(P)) = \pi(F^R(P))$.

\par
We will use the following unanimity condition, that is a strengthening of strict Pareto, adapted to the approval-preference case. We say that candidate $a$ unanimously preference-approval dominates candidate $b$ if
\begin{enumerate}
    \item for every voter $v_i$, $a \succ_i b$
    \item for some voter $v_i$, $a \in A_i$ and $b \not \in A_i$
\end{enumerate}

Together with ballot consistency, it implies that every voter who approves $b$ also approve $a$, and at least one voter who approves $a$ does not approve $b$. For simplicity we refer to this condition as our Pareto condition, and say that $a$ dominates $b$ when $a$ unanimously preference-approval dominates $b$.

\begin{definition}
An AVR rule $F^R$ is \textbf{Pareto-efficient} if for all approval-preference profile $P$ in which there exists $a,b \in \mathcal C$ such that $a$ dominates $b$, we have $b \not \in F^R(P)$.
\end{definition}

\begin{proposition}
AV$^R$, S-PAV$^R$, S-Phr$^R$, EnePhr$^R$, PAV$^R$ and SAV$^R$ are Pareto-efficiency, but not S-CCAV$^R$, CCAV$^R$ and TRIV$^R$.
\end{proposition}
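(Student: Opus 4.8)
The plan is to extract from the hypothesis ``$a$ dominates $b$'' (together with ballot consistency) three facts, then dispatch the six positive rules by one uniform argument and the three negative rules by one uniform counterexample. The three facts: (i) every voter who approves $b$ also approves $a$, and at least one voter approves $a$ but not $b$, so $S_V(a)>S_V(b)$ and $Sp_V(a)>Sp_V(b)$, and $b$ is not an approval winner (hence never the first finalist of a sequential rule); (ii) for every candidate $y$ (and likewise for any candidate $x_1$ in the role of first finalist) $S_V(by)\le S_V(ay)$ and $S_V(ay)-S_V(by)=|\{i:a,y\in A_i,\ b\notin A_i\}|\le|\{i:a\in A_i,\ b\notin A_i\}|=S_V(a)-S_V(b)$; (iii) every voter prefers $a$ to $b$, so $maj(\succ,\{a,b\})=\{a\}$, hence even if $\{a,b\}$ is a winning pair, $b$ is not selected from it.

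For the positive part, the common idea is: if a winning pair contains $b$ and is not $\{a,b\}$, it has the form $\{b,y\}$ with $y\ne a$, and swapping $b$ for $a$ produces a pair $\{a,y\}$ that is \emph{strictly} better for the rule's objective, contradicting that $\{b,y\}$ wins; together with (iii) this gives $b\notin F^R(P)$. For the $\alpha$-AV family (MAV$^R$, $\alpha=0$; PAV$^R$, $\alpha=\tfrac12$) the score of $\{a,y\}$ minus that of $\{b,y\}$ equals $(S_V(a)-S_V(b))-\alpha\,(S_V(ay)-S_V(by))\ge(1-\alpha)(S_V(a)-S_V(b))>0$ by (i)--(ii). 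For SAV$^R$ it is $Sp_V(a)-Sp_V(b)>0$. For S-PAV$^R$ ($\alpha=\tfrac12$), by (i) $b$ can only be a second finalist; if the first finalist $x_1\ne a$, the second-stage score of $a$ minus that of $b$ is $(S_V(a)-S_V(b))-\alpha\,(S_V(x_1a)-S_V(x_1b))\ge(1-\alpha)(S_V(a)-S_V(b))>0$, so $b$ is not a maximizer; if $x_1=a$ the pair is $\{a,b\}$. EnePhr$^R$ is the same $\alpha$-seqAV argument once one knows its discount coefficient is $<1$ (discussed below). For S-Phr$^R$, again $b$ is at best a second finalist, and for $x_1\ne a$ one checks the Phragm\'en value of $a$ is strictly below that of $b$: with $s=S_V(x_1)$ this is $(s+S_V(x_1a))\,S_V(b)<(s+S_V(x_1b))\,S_V(a)$, which follows from (ii) and $S_V(b)<s$ (as $x_1$ is an approval winner), the case $S_V(b)=0$ being immediate.

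For the negative part a single profile refutes CCAV$^R$, S-CCAV$^R$ and TRIV$^R$. Take $\mathcal C=\{a,b,c\}$ and the approval-preference profile with one ballot $ca|b$, three ballots $abc|$ ranked $a\succ b\succ c$, and one ballot $c|ab$. Then $a$ dominates $b$; $c$ is the unique approval winner, $S_V(c)=5>S_V(a)=4>S_V(b)=3$, with $S_V(ca)=4$ and $S_V(cb)=3$. Hence for CCAV the pairs $\{a,c\}$ and $\{b,c\}$ each cover all five voters and both win; for S-CCAV the first finalist is $c$ and $a,b$ tie as second finalist ($S_V(a)-S_V(ca)=0=S_V(b)-S_V(cb)$), so $\{a,c\}$ and $\{b,c\}$ both win; and since $b$ beats $c$ by $3$ to $2$, $b\in$ CCAV$^R(P)\cap$ S-CCAV$^R(P)$. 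Finally $c$ is the Condorcet loser and $b$ beats it, so $b\in$ TRIV$^R(P)$ as well (for TRIV$^R$ alone, the profile where everyone ranks $a\succ b\succ c$ already works).

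The step I expect to be the crux is upgrading ``weakly better'' to ``strictly better'' in the positive part, i.e.\ using $1-\alpha>0$: at $\alpha=1$ the swap of $b$ for $a$ is only weakly improving, a tie appears, and the counterexample lives exactly in that tie — which is precisely why CCAV$^R$ and S-CCAV$^R$ fail. The one point to get right for EnePhr$^R$ is therefore that its discount coefficient $\alpha=\min(1,Q/S_V(x_1))$ must be $<1$, i.e.\ the quota $Q$ must lie below the top approval score $S_V(x_1)$; if that fails (e.g.\ $Q=n$, or a very flat profile) EnePhr collapses to S-CCAV and the conclusion breaks, so one needs the standing assumption $Q<S_V(x_1)$ (a mild condition satisfied by the Droop and Hare quotas in ordinary elections), after which the $\alpha$-seqAV argument applies verbatim.
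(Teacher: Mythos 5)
Your proof is correct and follows essentially the same strategy as the paper's: for the positive rules you run the ``swap $b$ for $a$'' argument, and the decisive point is exactly the one the paper uses, namely that the discount coefficient satisfies $\alpha<1$ so the improvement $(1-\alpha)(S_V(a)-S_V(b))$ is strict; your S-Phr computation is an equivalent rearrangement of the paper's chain of inequalities. Two points differ. First, your counterexample for CCAV$^R$, S-CCAV$^R$ and TRIV$^R$ (the five-voter profile with $a$ dominating $b$ and a tie between $\{a,c\}$ and $\{b,c\}$ in the Chamberlin--Courant objective) is not the paper's profile, but it checks out: $a$ dominates $b$, both pairs cover all five voters, and $b$ beats $c$ three to two, so $b$ is a possible winner. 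The paper instead uses $(abc|,\,ab|c,\,a|bc,\,k\times|bca)$ with $b$ dominating $c$; both exploit the same tie phenomenon at $\alpha=1$. Second, your caveat on EnePhr$^R$ --- that the argument needs $Q<S_V(x_1)$, since otherwise the rule degenerates to S-CCAV --- is well taken and in fact matches the paper's own final paragraph, which constructs profiles where $Q/S_V(x_1)\ge 1$ and EnePhr$^R$ fails Pareto-efficiency. You are right that the proposition's unqualified inclusion of EnePhr$^R$ among the Pareto-efficient rules only holds under that quota condition; making the hypothesis explicit, as you do, is the more careful reading.
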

\begin{proof}
Consider a profile $P$ in which a candidate $a$ dominates another candidate $b$. Let $F$ be an approval based committee rule and $F^R$ the approval with runoff rule associated to it. We want to show that $b \notin F^R(P)$. We assume by contradiction that $b \in F^R(P)$. Thus, there exists $x \in \mathcal C$ such that $\{b,x\} \in F(V)$. 
\par 
 $x \ne a$, because we know that $maj(\succ, \{a,b\}) = a$ since $a$ dominates $b$. We are going to show that for every rule $F \in \{$MAV, S-PAV, S-Phr, PAV, SAV$\}$, we have a contradiction.
\begin{itemize}
    \item Let $F$ be an $\alpha$-AV rule with $\alpha < 1$ (like MAV and PAV). Define $f(y,y')$ the $\alpha$-AV score of a pair of candidates $\{y,y'\}$: 
    \begin{align*}
        f(y,y') &= S_V(y) + S_V(y') - \alpha S_V(yy') \\
        &=S_V(y) + (1-\alpha)S_V(y') + \alpha(S_V(y') - S_V(yy'))
    \end{align*}
    Since $a$ dominates $b$, we have $S_V(a) > S_V(b)$ and for all $x$,  $S_V(a)-S_V(ax) \ge S_V(b)-S_V(bx)$. We have $f(x,a) = S_V(a) + (1-\alpha)S_V(x) + \alpha(S_V(a) - S_V(xa)) > S_V(b) + (1-\alpha)S_V(x) + \alpha(S_V(b) - S_V(xb)) = f(x,b)$, thus $\{x,b\}$ cannot be a pair of finalists because the rule selects the pairs of candidates maximizing $f$.
    \item Le $F$ be an $\alpha$-seqAV rule with $\alpha < 1$ (like MAV and S-PAV). $b$ is not the first finalist selected because $S_V(a) > S_V(b)$. Let $x$ be the first finalist selected and 
    \begin{align*}
        f_x(y) &= S_V(y) - \alpha S_V(xy)\\
        &= (1-\alpha)S_V(y) + \alpha(S_V(y) - S_V(xy))
    \end{align*}
    Again, $a$ dominates $b$, so we have $S_V(a) > S_V(b)$ and for all $x$,  $S_V(a)-S_V(ax) \ge S_V(b)-S_V(bx)$. We have $f_x(a) =  (1-\alpha)S_V(a) + \alpha(S_V(x) - S_V(xa))> (1-\alpha)S_V(b) + \alpha(S_V(x) - S_V(xb)) = f_x(b)$, thus $\{x,b\}$ cannot be a pair of finalists because the rule selects the candidates maximizing $f_x$.
    \item Let $F =$ S-Phr. Since $S_V(a) > S_V(b)$, $b$ is the second finalist selected and $x$ is the first finalist. For $y \in \mathcal C$, let
    \begin{align*}
        f_x(y) &= \frac{1 + \frac{S_V(xy)}{S_V(x)}}{S_V(y)} \\
        &= \frac{1}{S_V(x)} \frac{S_V(x) + S_V(xy)}{S_V(y)}
    \end{align*}
    We want to prove that $\frac{S_V(x) + S_V(xa)}{S_V(a)} < \frac{S_V(x) + S_V(xb)}{S_V(b)}$. Now, observe that 
    \begin{align*}
    \frac{S_V(x) + S_V(xb)}{S_V(b)} \ge \frac{S_V(x)}{S_V(b)} > 1 \ge \frac{S_V(xa)-S_V(xb)}{S_V(a) - S_V(b)}
    \end{align*}
    which gives
    \begin{align*}
    &(S_V(x) + S_V(xb))(S_V(a) - S_V(b)) > (S_V(xa) - S_V(xb))S_V(b) \\
    \Leftrightarrow & (S_V(x) + S_V(xb))S_V(a) + S_V(b)(S_V(x)-S_V(xb)) > (S_V(xa) + S_V(x))S_V(b) + S_V(b)(S_V(x)-S_V(xb)) \\
    \Leftrightarrow & (S_V(x) + S_V(xb))S_V(a) > (S_V(xa) + S_V(x))S_V(b) \\
     \Leftrightarrow & \frac{S_V(x) + S_V(xb)}{S_V(b)} > \frac{S_V(x) + S_V(xa)}{S_V(a)} 
    \end{align*}
    Therefore, $f_x(a) < f_x(b)$, thus $\{b,x\}$ cannot be a pair of finalists because S-Phr selects the candidates minimizing $f_x$.
    \item If $F = $ SAV, we define the Splitted Approval score $Sp_V(y) = \sum_{y \in A_i} \frac{1}{|A_i|}$. The two finalists are $x$ and $b$. However, $Sp_V(a) = Sp_V(b) + \sum_{i:a \in A_i \text{ and } b \notin A_i}\frac{1}{|A_i|} > Sp_V(b)$. This contradicts that $\{b,x\}$ is a possible pair of finalists.
\end{itemize}
In all those cases, we showed that $\{b,x\} \notin F(V)$ for all $x$. Therefore, $b \notin F^R(P)$.
\par 
To show that the other rules are not Pareto-efficient, consider the profile $P = (V, \succ) = (abc|, ab|c, a|bc, k \times |bca)$ with $k > 3$. For $F \in \{\text{CCAV}, \text{S-CCAV}\}$, $F(V) = \{\{a,b\}, \{a,c\}\}$ and $maj(\succ, \{a,c\}) = c$, so $c \in F^R(V)$. However, $b$ dominates $c$. Thus, $F$ is not Pareto-efficiency. 
\par 
For EnePhr with quota $Q > 0$, observe that it is equivalent to S-CCAV if $\frac{Q}{S_V(a)} \ge 1$. Then, if $Q = \beta n = \beta (k+3)$ with $\beta \in [0,1]$, we set $k$ in $P$ such that $\frac{Q}{S_V(a)} \ge 1$, i.e. $k \ge 3(\frac{1}{\beta} - 1)$.
\end{proof}

\par 

\par 
\emph{A $i$-deviation} of a profile $P$ is a profile $P'$ such that for all $j \ne i, A_j = A'_j$ and $\succ_j = \succ'_j$. We define strong strategy-proofness as the impossibility for a voter to deviate from a profile to another in which there is a winner that she prefers to all the winners of the first profile. We also define weak strategy-proofness as the impossibility for a voter to deviate from a profile where she  does not approve any winning candidate to one where she approves at least one winning candidate.

\begin{definition}
An approval with runoff rule $F^R$ is \textbf{strongly strategy-proof} if for every profile $P$, there is no $i$-deviation $P'$ of $P$ such that $\exists x \in F^R(P'), \forall y \in F^R(P), x \succ_i y$. 
\end{definition}

\begin{definition}
An AVR rule $F^R$ is \textbf{weakly strategy-proof} if for every profile $P$, there is no $i$-deviation $P'$ of $P$ such that $\left | F^R(P) \cap A_i \right | = 0$ and $\left | F^R(P') \cap A_i \right | \ge 1$. 
\end{definition}

It is hard to get strategy-proofness in approval-based committee voting, as it is incompatible with proportionality \citep{DBLP:journals/corr/abs-2104-08594}. 
For AVR rules, this is no better: weak strategy-proofness is incompatible with Pareto-efficiency. Among the rules defined in Section \ref{sec:rules}, the only strategy-proof rule is TRIV$^R$.

\begin{theorem} \label{thm:strategy-proof}
No AVR rule is weakly strategy-proof and Pareto-efficient.
\end{theorem}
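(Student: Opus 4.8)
The plan is to exhibit a single approval-preference profile $P$ on which any Pareto-efficient AVR rule must be manipulable in the weak sense. The key idea is to construct $P$ so that Pareto-efficiency forces the winner to be some candidate $w$ that a particular voter $v_i$ dislikes (i.e. $w \notin A_i$), while a small deviation by $v_i$ — for instance, enlarging $A_i$ to make one of her approved candidates a finalist — changes the outcome to a candidate she does approve. Since $F^R$ is defined from an ABC rule via runoffs, I must control which pairs $F(V)$ can output; the cleanest way is to design $V$ so that Pareto-efficiency pins down $F(V)$ (or at least pins down $F^R(P)$) regardless of which ABC rule $F$ sits underneath.

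First I would fix a small candidate set, say $\mathcal C = \{a, b, c\}$, and a population split into a few groups, then arrange the approval scores and the majority relation so that: (i) $a$ dominates $b$ in the Pareto sense, so $b \notin F^R(P)$ for every Pareto-efficient $F^R$; (ii) the only remaining possible finalist pairs, after discarding those containing the dominated candidate $b$ as a ``live'' option, are such that every one of them yields the same runoff winner $w$ — and $w$ is chosen so that some voter $v_i$ has $w \notin A_i$ and $A_i \neq \emptyset$. Here I will likely want $w = a$: since $a$ beats $b$ in majority and $a$ has the top-ish approval score, many natural finalist pairs send $a$ to victory, and I can pick the profile so that the only finalist candidates that survive Pareto are $a$ and possibly $b$ (whose pairs with $a$ still give $a$). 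Then I need a voter $v_i$ with $a \notin A_i$, i.e. a voter who approves, say, only $c$, and whose true ranking has $c \succ_i a$.

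Second, I would specify the $i$-deviation $P'$: voter $v_i$ changes her ballot (e.g. drops her approval of $a$'s rivals or adds approvals) so that in $V'$ the score of $c$ rises enough that some Pareto-efficient ABC rule is now forced to include $c$ in a finalist pair, and $c$ wins the resulting runoff — giving $|F^R(P') \cap A_i| \geq 1$ while $|F^R(P) \cap A_i| = 0$. The delicate point is that the theorem must hold for \emph{every} weakly strategy-proof AVR rule, not just the named ones, so I cannot rely on a formula for $F$; instead the argument must be: \emph{if} $F^R$ is Pareto-efficient then $F^R(P) = \{a\}$ and $F^R(P') \ni c$ are \emph{forced}, so $F^R$ fails weak strategy-proofness. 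Forcing $F^R(P') \ni c$ from Pareto-efficiency alone is the crux — I would engineer $P'$ so that \emph{every} candidate other than $c$ is Pareto-dominated in $P'$ (or so that $c$ dominates both other candidates pairwise in the relevant sense), making $\{$ any pair not involving $c\}$ impossible and every surviving pair a majority win for $c$. Concretely, after $v_i$'s deviation I want $c$ to dominate $a$ and $c$ to dominate $b$ simultaneously, which by the Proposition forces $a, b \notin F^R(P')$, hence $F^R(P') = \{c\}$.

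The main obstacle is the simultaneous bookkeeping: I need a single profile where $a$ dominates $b$ (for the no-manipulation outcome to be $a$) yet a \emph{one-voter} change flips the domination structure so that $c$ dominates everything, all while keeping ballot consistency and keeping the majority margins pointing the right way. Domination requires both a unanimous strict ordinal preference and an approval witness, and flipping it with one voter is tight — I expect to need the margin between $c$ and the others to be exactly one voter in both ordinal and approval terms, with voter $v_i$ being precisely the pivot. I would therefore build $P$ with an electorate of modest size (a handful of voters), verify domination of $b$ by $a$ in $P$, verify that every Pareto-permissible finalist pair in $P$ runs off to $a$, then specify $v_i$'s sincere-to-insincere switch, verify $c$ dominates $a$ and $b$ in $P'$, and conclude that Pareto-efficiency forces $F^R(P) = \{a\}$, $F^R(P') = \{c\}$, with $a \notin A_i \ni c$, contradicting weak strategy-proofness.
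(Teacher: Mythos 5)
There is a genuine gap, and it sits exactly where you flag the ``crux'': forcing $F^R(P') \ni c$ after a single-voter deviation using Pareto-efficiency alone. To force $F^R(P)\cap A_i=\emptyset$ in the original profile with $A_i=\{c\}$ and $a$ the designated winner, you must ensure that every Pareto-admissible finalist pair elects a candidate outside $A_i$; in particular the pair $\{a,c\}$ must either be excluded (its majority winner dominated) or won strictly by $a$. Excluding it is self-defeating (if $c$ is dominated by $x$, ballot consistency forces $v_i$ to approve $x$ as well, pushing the problem up a domination chain that ends with $v_i$ approving the top candidate), so a strict majority of voters must rank $a\succ c$. But then, after \emph{any} $i$-deviation, a strict majority of the \emph{unchanged} voters still ranks $a\succ c$, so $c$ cannot come to dominate $a$ in $P'$; likewise $a$ dominating $b$ in $P$ means all voters rank $a\succ b$, so $b$ cannot come to dominate $a$ either. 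Hence $a$ is undominated in $P'$ and still strictly wins the majority contest in the pair $\{a,b\}$, so the ABC rule that outputs $F(V')=\{\{a,b\}\}$ is Pareto-admissible on $P'$ and elects $a\notin A_i$. Your construction therefore cannot \emph{force} every Pareto-efficient rule to elect an approved candidate after the deviation; some Pareto-efficient rule evades it, and the theorem quantifies over all rules.

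This is precisely why the paper's proof is not a one-shot argument. It exploits the fact that the finalist selection $F$ sees only the approval profile $V$, not the rankings: by pairing the same $V$ with \emph{different} preference profiles, Pareto-efficiency excludes different pairs each time, and their conjunction pins down $F(V)$ exactly (e.g.\ $F(V)=\{\{a,c\}\}$ for the initial $V$). It then repeats this for neighbouring approval profiles $V''$ and $V^*$, and---crucially---uses weak strategy-proofness itself at these intermediate steps (not only at the end) to exclude further pairs such as $\{a,b\}\in F(V'')$, before assembling the final manipulation. If you want to salvage your plan, you would need to import both of these devices: vary the rankings over a fixed approval profile to fully determine $F$, and be prepared to invoke weak strategy-proofness more than once.
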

\begin{proof}
\begin{table}[!t]
    \centering
    \begin{tabular}{|c|c|c|c|c|} \hline
      &  $v_1$& $v_2$& $v_3$  & Finalists\\ \hline 
      $V = V'$ & $\{a\}$ & $\{a\}$ & $\{c\}$ & $\{a,c\}$ \\ \hline  
      $V''$ & $\{a,b\}$ & $\{a\}$ & $\{c\}$ &  $\{a,c\}$\\ \hline 
      $V^* $ & $\{a,b\}$ & $\{a\}$ & $\{b,c\}$ & $\{a,b\}$ \\ \hline
    \end{tabular}
    \caption{Profiles used in proof of Theorem \ref{thm:strategy-proof}}
    \label{tab:proof_thm_stgyproof}
\end{table}
Assume $F^R$ is Pareto-efficient and weakly strategy-proof. 
Let $P = (V, \succ) = (a|bc, a|bc, c|ab, 10 \times abc|)$ --- that is, $V$ contains 2 approval ballots $\{a\}$, 1 $\{c\}$ and 10 $\{a,b,c\}$, and $\succ$ contains 12 rankings $a \succ b \succ c$ and 1 $c \succ a \succ b$. In $P$, $a$ dominates $b$ and $maj(\succ, \{b,c\}) = b$, which implies $\{c,b\} \not \in F(V)$. Let $P' = (V', \succ) = \{a|cb, a|cb, c|ab, 10 \times cba|\}$. In $P'$, $c$ dominates $b$ and $maj(\succ, \{a,b\}) = b$, therefore  
$\{a,b\} \not \in F(V')$. Because $V = V'$, $F(V) = F(V') = \{a,c\}$. 

\par 

Let $P'' = (V'', \succ'') = (ab|c, a|bc, c|ab, 10 \times abc|)$. $a$ dominates $b$ in $P''$, and $maj(\succ'',\{b,c\}) = b$. By Pareto-efficiency, $b \notin F^R(P'')$ and thus $\{b,c\} \not \in F(V'')$. 
Assume $\{a,b\} \in F(V'')$. Let $\overline{P} = (V, \overline{\succ}) = (a|bc, a|bc, c|ab, 10 \times cab|)$ and a deviation $\overline{P''} = (V'', \overline{\succ}) = (a\underline{b}|c, a|bc, c|ab, 10 \times cab|)$, where voter $v_1$ deviates from $\{a\}$ to $\{a,b\}$. We know $F(V) = \{a,c\}$ so $F^R(\overline{P}) = maj(\overline{\succ}, \{a,c\}) = c$, and $a \in F^R(\overline{P''})$ because we assumed $\{a,b\} \in F(V'')$. This is a successful manipulation, which contradicts strategy-proofness. Thus, $\{a,b\} \notin F(V'')$ and $F(V'') = \{a,c\}$.

\par 

Let $P^* = (V^*, \succ^*) = (ab|c, a|bc, bc|a, 10 \times bca|)$. In $P^*$, $b$ dominates $c$ and $maj(\succ^*, \{a,c\}) = c$. Therefore, $\{a,c\} \notin F(V^*)$. Assume 
that $\{b,c\} \in F(V^*)$. Let $\widehat{P} = (V, \widehat{\succ}) = (10 \times acb|, ab|c, a|bc, c|ba)$ and $\widehat{P}^* = (V^*, \widehat{\succ}) = (ab|c, a|bc, c\underline{b}|a, 10 \times acb|)$ a deviation of voter $v_2$. We have $F^R(\widehat{P}) = maj(\widehat{\succ}, \{a,c\}) = a$. However, if we assume $\{b,c\} \in F(V^*)$, then $c \in F^R(\widehat{P}^*)$ because $maj(\widehat{\succ}, \{b,c\}) = c$. Therefore, the deviating voter makes $c$ win. This contradicts strategy-proofness, therefore, $\{b,c\} \notin F(V^*)$ and $F(V^*) = \{a,b\}$.

\par 

Finally, let $\breve{P^*} = (V^*, \breve{\succ}) = (ab|c, a|bc, cb|a, 10 \times cab|)$ and the deviation $\breve{P} = (V, \breve{\succ}) = (, ab|c, a|bc, c|\underline{b}a, 10 \times cab|)$,
where the voter $v_3$ deviates. We have $F^R(\breve{P^*}) = maj(\breve{\succ},\{a,b\}) = a$ and $F^R(\breve{P}) = maj(\breve{\succ},\{a,c\}) = c$. This deviation is a manipulation. This contradicts strategy-proofness, and proves the theorem.

\end{proof}

This set of properties is minimal: TRIV$^R$ is weakly strategy-proof but not Pareto-efficient, MAV$^R$ is Pareto-efficient but not weakly strategy-proof. 
\begin{proposition}
TRIV$^R$ is strategy-proof but not AV$^R$, S-PAV$^R$, S-Phr$^R$, EnePhr$^R$, S-CCAV$^R$, PAV$^R$, CCAV$^R$ and SAV$^R$.
\end{proposition}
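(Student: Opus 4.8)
The eight rules split into two groups, plus $\mathrm{TRIV}^R$. For $\mathrm{TRIV}^R$ I will show it is \emph{strongly} strategy-proof; for each of the other eight I will exhibit a \emph{weak} manipulation. Both facts transfer to the other notion thanks to an elementary remark: under ballot consistency, any successful weak manipulation is also a successful strong one. Indeed, if voter $v_i$ can deviate from $P$, where $F^R(P)\cap A_i=\emptyset$, to an $i$-deviation $P'$ with some $x\in F^R(P')\cap A_i$, then every $y\in F^R(P)$ lies outside $A_i$, so ballot consistency forces $x\succ_i y$ for all $y\in F^R(P)$; hence $x$ witnesses a strong manipulation at $P$. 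Consequently ``strongly strategy-proof'' implies ``weakly strategy-proof'', and dually ``not weakly strategy-proof'' implies ``not strongly strategy-proof''.

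\textbf{$\mathrm{TRIV}^R$ is strongly strategy-proof.} Since $\mathrm{TRIV}$ returns every pair, $\mathrm{TRIV}^R(P)$ equals the union of $maj(\succ,\{x,y\})$ over all pairs, i.e. exactly the set of candidates that are not strict Condorcet losers of $\succ$; this set depends only on $\succ$, is non-empty, and misses at most one candidate (a profile has at most one strict Condorcet loser). Fix $P$ and $v_i$, and let $W=\mathrm{TRIV}^R(P)$. A successful manipulation would require a candidate $x$ that is not a strict Condorcet loser of the deviated ordinal profile $\succ'$ and that $v_i$ strictly prefers to every member of $W$. If $W=\mathcal C$ there is no such $x$. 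Otherwise let $\ell$ be the unique strict Condorcet loser of $\succ$, so $W=\mathcal C\setminus\{\ell\}$; the only candidate that $v_i$ can prefer to all of $W$ is $\ell$, and only if $v_i$ ranks $\ell$ first. But then $v_i$ already votes for $\ell$ in every pairwise comparison, so in any $i$-deviation the majority margin of $\ell$ against each other candidate can only weaken; as $\ell$ strictly lost all of them in $P$, it still does in $P'$, so $\ell\notin\mathrm{TRIV}^R(P')$ and $v_i$ cannot manipulate.

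\textbf{The other rules are not strategy-proof.} By the previous proposition, AV$^R$, S-PAV$^R$, S-Phr$^R$, EnePhr$^R$, PAV$^R$ and SAV$^R$ are Pareto-efficient, so by Theorem~\ref{thm:strategy-proof} none of them is weakly strategy-proof, hence (by the remark) none is strongly strategy-proof. It remains to handle CCAV$^R$ and S-CCAV$^R$, which are \emph{not} Pareto-efficient, so Theorem~\ref{thm:strategy-proof} does not apply and I must exhibit explicit manipulations; both rules are treated by the same construction. Take a profile $P$ on candidates $\{a,b,c,d\}$ in which $a$ is the unique approval winner, $c$ beats $a$ in a pairwise majority (e.g. $c$ is the Condorcet winner) while receiving only few approvals, the committee rule selects the single pair $\{a,d\}$, and $maj(\succ,\{a,d\})=a$, so $F^R(P)=\{a\}$. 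Let $v_i$ have $A_i=\{c,d\}$ and ranking $c\succ d\succ a\succ b$, so $v_i$ approves no winner of $P$. Now let $v_i$ deviate to $A_i'=\{c\}$, keeping her ranking: for CCAV this drops by one the number of voters covered by every pair containing $d$ but not $c$ (in particular $\{a,d\}$) and leaves the coverage of $\{a,c\}$ unchanged; for S-CCAV it drops by one the quantity $S_V(d)-S_V(ad)$ governing the second finalist and leaves $S_V(c)-S_V(ac)$ unchanged (and does not affect $S_V(a)$, so $a$ remains the first finalist). Choosing the remaining voters so that $\{a,c\}$ trails $\{a,d\}$ by exactly one unit while all other pairs trail strictly more, the deviation makes $\{a,c\}$ tie for first, so $\{a,c\}\in F(V')$; since $v_i$'s ranking is unchanged and $c$ beat $a$ pairwise in $P$, we get $maj(\succ,\{a,c\})=c$, hence $c\in F^R(P')\cap A_i$. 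Thus $v_i$ moves from approving no winner to approving the winner $c$: weak, hence strong, strategy-proofness fails.

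The only delicate point is the last construction: one must pick the other voters' ballots and rankings so that simultaneously $a$ is the approval winner, $c$ beats $a$ pairwise with a small approval score, $a$ beats $d$ pairwise, and the CCAV coverage (resp. S-CCAV second-finalist score) of $\{a,c\}$ sits exactly one unit below that of $\{a,d\}$ with no third pair interfering; a handful of ``$\{a,b\}$'' ballots together with voters ranking $d$ near the bottom satisfies all of these at once. Everything else is routine.
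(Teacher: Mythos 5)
Your overall architecture coincides with the paper's: the six Pareto-efficient rules are dispatched via the impossibility of Theorem~\ref{thm:strategy-proof}, TRIV$^R$ is handled directly, and CCAV$^R$ and S-CCAV$^R$ get a bespoke counterexample. Two of your additions genuinely improve on the paper's write-up: the remark that under ballot consistency every weak manipulation is also a strong one (so the negative claims for the eight rules cover both notions, and your positive claim for TRIV$^R$ is the stronger of the two), and the complete argument that TRIV$^R$ is strongly strategy-proof via the characterization of its output as all candidates except the strict Condorcet loser --- the paper merely asserts this is ``clearly'' true.

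The one genuine gap is the CCAV$^R$/S-CCAV$^R$ counterexample, which you never actually exhibit: you list the constraints a profile must satisfy and assert that ``a handful of $\{a,b\}$ ballots together with voters ranking $d$ near the bottom'' meets them. That recipe does not work as stated: a voter approving $\{a,b\}$ is covered by every pair containing $a$, so such ballots contribute equally to the Chamberlin--Courant coverage of $\{a,d\}$ and of $\{a,c\}$ and cannot create the one-unit gap you need; for that you need voters approving $d$ but neither $a$ nor $c$. The constraints are satisfiable, and verifying this is the real content of the step. For instance, take $v_1 = cd|ab$, four voters $a|bcd$, two voters $d|cab$, one voter $b|cad$ and one voter $c|abd$: then $S(a)=4$ is the unique maximum, the coverages of $\{a,d\}$, $\{a,c\}$ and every other pair are $7$, $6$ and at most $5$ respectively, $c$ beats $a$ by $5$ to $4$, and $a$ beats $d$ by $6$ to $3$, so both rules elect $a$ while $v_1$ approves no winner; when $v_1$ drops $d$ from her ballot, $\{a,c\}$ ties $\{a,d\}$ and $c\in A_1$ becomes a possible winner. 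Note that the paper's own counterexample is more economical: on three candidates, with profile $(ca|b,\ c|ab,\ b|ca,\ 10\times abc|)$ the finalists are $\{b,c\}$ and the winner is $b$, and shrinking $v_1$'s ballot to $\{a\}$ makes all three pairs tie, so that $a\in A_1$ becomes a possible winner.
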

\begin{proof}
We know by Theorem \ref{thm:strategy-proof} that no rule is Pareto-efficient and Strategy-Proof. Thus, AV$^R$, (S-)PAV$^R$, S-Phr$^R$, EnePhr$^R$, and SAV$^R$ are not weakly strategy-proof. The trivial rule is clearly strategy-proof. For CCAV$^R$ and S-CCAV$^R$, consider the profile $P = (V, \succ) = (ca|b, c|ab, b|ca, 10 \times abc|)$. In this profile, with these two rules the finalists are $\{c, b\}$ and the winner is $b$. If the voter $v_1$ approving $\{c,a\}$ now approves $\{a\}$ only, we obtain the profile $P' = (V', \succ') = (a|\underline{c}b, c|ab, b|ca, 10 \times abc|)$ and all pairs of finalists are possible. Thus, $a$ is a possible winner as $maj(\succ', \{a,b\}) = a$. This is a manipulation.
\end{proof}
\par We now focus on monotonicity.
Given a profile $P =(V,\succ)$ and $a \in \mathcal C$, a profile $P' = (V', \succ') \ne P$ is an $a$-improvement of $P$ if for some $i \in N$ we have
\begin{enumerate}
    \item $A'_i = A_i \cup \{a\}$ or $A'_i = A_i$
    \item For all $x,y$ with $y \ne a$, if $x \succ_i y$ then $x \succ'_i y$
    \item For all $ j \ne i, A'_j = A_j$ and $\succ'_j = \succ_j$
\end{enumerate}
\begin{definition}
An AVR rule $F^R$ is \textbf{monotonic} if for every $a \in F^R(P)$ and for every $a$-improvement $P'$ of $P$, we have $a \in F^R(P')$.
\end{definition}

\begin{proposition}
AV$^R$ and TRIV$^R$ is monotonic but not S-PAV$^R$, S-Phr$^R$, EnePhr$^R$, S-CCAV$^R$, PAV$^R$, CCAV$^R$, SAV$^R$ are not.
\end{proposition}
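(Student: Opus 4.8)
The plan is to handle the two positive claims directly and to refute monotonicity of the remaining seven rules with a single shared counterexample. For $\text{TRIV}^R$, monotonicity is essentially immediate: since $\text{TRIV}(V)$ is the set of all pairs of candidates regardless of $V$, we have $a\in\text{TRIV}^R(P)$ if and only if $a\in maj(\succ,\{a,y\})$ for some $y$. An $a$-improvement changes only one voter $i$, and only in ways that can move $a$ up in $\succ_i$ and never down, while leaving untouched every comparison not involving $a$; hence $a$'s majority tally against each fixed opponent can only increase, so $a\in maj(\succ',\{a,y\})$ and $a\in\text{TRIV}^R(P')$.

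For $\text{AV}^R$ (that is, $\text{MAV}^R$), recall $\text{MAV}(V)$ returns the pairs of maximum total approval score. Suppose $a\in\text{MAV}^R(P)$, witnessed by a pair $\{a,y\}\in\text{MAV}(V)$ with $a\in maj(\succ,\{a,y\})$; note this forces $S_V(y)=\max_{z\neq a}S_V(z)$. Let $P'=(V',\succ')$ be an $a$-improvement. If the deviating voter does not add $a$ to her ballot then $V'=V$, so $\{a,y\}\in\text{MAV}(V')$, and as in the $\text{TRIV}^R$ case $a\in maj(\succ',\{a,y\})$; hence $a\in\text{MAV}^R(P')$. Otherwise $S_{V'}(a)=S_V(a)+1$ while $S_{V'}(x)=S_V(x)$ for every $x\neq a$, so every pair containing $a$ gains exactly $1$ and every pair avoiding $a$ keeps its score; the pair $\{a,y\}$ was already maximal among pairs containing $a$ and now strictly outscores every pair avoiding $a$, so it stays in $\text{MAV}(V')$, and again $a\in maj(\succ',\{a,y\})$, giving $a\in\text{MAV}^R(P')$.

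For the seven remaining rules I would build a single approval-preference profile $P$ on $\mathcal C=\{a,b,c\}$ and one $a$-improvement $P'$ that breaks monotonicity for all of them at once. The template is: in $V$, $a$ is the unique approval winner and, for \emph{every} one of the seven first-round rules, the second finalist is a tie between $b$ and $c$, so $F(V)=\{\{a,b\},\{a,c\}\}$; this common tie is obtained by imposing $S_V(b)=S_V(c)$, $S_V(ab)=S_V(ac)$ and $Sp_V(b)=Sp_V(c)$, which makes the second-finalist scores of $b$ and $c$ coincide simultaneously for the $\alpha$-AV and $\alpha$-seqAV families (hence for PAV, CCAV, S-PAV, S-CCAV and EnePhr with any quota), for S-Phr, and for SAV. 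In $\succ$ we arrange $a$ to beat $b$ and $c$ to beat $a$, so $a\in F^R(P)$ via the pair $\{a,b\}$. The $a$-improvement is one voter $i$ approving $\{b\}$ who adds $a$ to her ballot without changing her ranking; this increases $S(ab)$ by one while leaving $S(b)$, $S(c)$ and $S(ac)$ fixed and keeping $a$ the approval winner, and a short computation shows the tie then breaks \emph{strictly} in favour of $c$ for each rule (for $\alpha$-AV/$\alpha$-seqAV and EnePhr, $c$'s value overtakes $b$'s by a positive multiple of $\alpha$; for S-Phr the numerator $S(a)+S(ab)$ grows by two whereas $S(a)+S(ac)$ grows by one; for SAV the split score of $b$ drops by $\tfrac12$ while that of $c$ is unchanged). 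Thus $F(V')=\{\{a,c\}\}$ for all seven rules, and since $c$ beats $a$ we get $a\notin F^R(P')$, contradicting monotonicity.

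The one genuine obstacle is \emph{ballot consistency}: because most voters approve $a$ they are forced to rank $a$ above everything they disapprove, which makes it delicate to have $c$ beat $a$ in the runoff. I would circumvent this by including enough voters who approve only $b$ -- these are free to rank $c$ above $a$ -- while still keeping $a$ the strict approval winner and keeping $\{a,b\}$ and $\{a,c\}$ strictly ahead of $\{b,c\}$ (needed for PAV and CCAV). A concrete instance that works is: one ballot $\{a,c\}$, one ballot $\{a,b\}$, five ballots $\{a\}$, four ballots $\{c\}$ and four ballots $\{b\}$, the deviator being one of the $\{b\}$-voters, with rankings chosen so that $a$ beats $b$ and $c$ beats $a$; verifying the seven rules on $P$ and on $P'$ is then a routine finite computation. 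Conceptually this is the familiar tie-breaking non-monotonicity of runoff rules; the only added subtlety is forcing the same tie to break the same way for all seven first-round rules simultaneously.
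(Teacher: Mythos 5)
Your proposal is correct and follows essentially the same strategy as the paper: a direct score-preservation argument for MAV$^R$ (and the trivial rule), followed by a single shared counterexample in which $a$ is the approval winner, $b$ and $c$ are tied for the second slot under all seven rules, and an $a$-improvement by a $b$-approver raises $S_V(ab)$ so that every rule breaks the tie toward $c$, who then beats $a$ in the runoff. The paper's own counterexample, $(a|bc,\, a|bc,\, b|ca,\, c|ba,\, 10\times cab|)$, realizes the same mechanism more compactly by using empty ballots to zero out all pairwise intersections; your larger instance checks out as well, and your explicit handling of ballot consistency and of the $A'_i=A_i$ case in the MAV argument are welcome refinements rather than deviations.
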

\begin{proof}
Let's show that MAV$^R$ is monotonic. Let $P = (V, \succ)$ be a profile in which $a \in \mavr{(P)}$. Let $x \in \mathcal C$ such that $\{a,x\} \in F(V)$. Let $P' = (V', \succ')$ be an $a$-improvement of $P$. Since $S_{V'}(a) = S_{V}(a)+ 1$ and for all$ y \ne a$, $S_{V'}(y) = S_V(y)$, it is obvious that $\{a, x\} \in F(V')$. Since $a \in maj(\succ, \{a,x\})$, it is also obvious that $a \in maj(\succ', \{a,x\})$ and thus $a \in \mavr{(P)}$.
\par 
To show that the other rules are not monotonic, we can use a counter example. Consider the profile $P = (V, \succ) = (a|bc, a|bc, b|ca, c|ba, 10 \times cab|)$. In this profile, for all rule $F \in \{$CCAV, PAV, S-CCAV, S-PAV, EnePhr, S-Phr, SAV$\}$, $F(V) = \{\{a,b\}, \{a,c\}\}$ and $F^R(P) = \{a, c\}$. $a$ is a winner of the election. Now consider the $a$-improvement $P' = (V', \succ') = (a|bc, a|bc, b\underline{a}|c, c|ba, 10 \times cab|)$. Now, for all the rules considered before, $F(V) = \{c,a\}$ and $F^R(P) = \{c\}$ and $a \notin F^R(P)$. Thus, none of these rules is monotonic.
\end{proof}

MAV$^R$ satisfies both monotonicity and Pareto-efficiency. The only other rule that we considered in Section \ref{sec:rules} that is monotonic is the trivial rule, which is however not Pareto-efficient. However, MAV$^R$ is not the only rule satisfying these two properties. This is also the case for $F^R$, where $F$ returns all pairs of candidates $\{a,b\}$ such that either (i) neither $a$ nor $b$ is Pareto-dominated in $V$ or (ii) $a$ is the only candidate that dominates $b$ in $V$. Indeed, let $b$ be a candidate dominated by $a$ in some profile $P = (V, \succ)$. Then, if there exists $x$ such that $\{x, b\} \in F(V)$, we must have $x = a$ by definition of $F$. Since $a$ dominates $b$, it is clear that $maj(\succ, \{a,b\}) = a$, and $b\notin F^R(P)$. This proves Pareto-efficiency.  
\par
Let's now show monotonicity. For a profile $P = (V, \succ)$, let $a \in F^R(P)$ a winning candidate, and $P'$ be a $a$-improvement of $P$. Let $\{x,a\} \in F(V)$ be the pair of finalists such that $maj(\succ,\{x,a\}) = a$. $x$ is either not dominated or dominated only by $a$. If one new voter approves $a$, then $x$ is still either not dominated or dominated only by $a$, therefore we still have $\{x,a\} \in F(V)$. If no new voter approves $a$, then we obviously have $\{x,a\} \in F(V)$. If $a$ wins the majority vote against $x$ in $P$, then it clearly wins it in $P'$. Therefore, $F$ is also monotonic and Pareto-efficient (but also neutral and anonymous), which means we cannot characterize MAV$^R$ with only these properties.


\par 
Finally, we focus on clone-proofness. Informally, this property means that adding a clone of a candidate 
does not change significantly the outcome of the election. Formally, let $a \in \mathcal C$, and a profile $P'$ over set of candidates $C' = C \cup \{a'\}$. $P'$ is an $a$-cloning extension of $P$ if
\begin{enumerate}
    \item For every $v_i$, $a \in A_i$ if and only if $a' \in A_i$
    \item For every $v_i$ and $x \not \in \{a, a'\}$, $a \succ_i x$ if and only if $a' \succ_i x$
\end{enumerate}

\begin{definition}
An AVR rule $F^R$ is \textbf{clone-proof} if for any profile $P$, candidate $a \in \mathcal C$, and an $a$-cloning extension $P'$ of $P$, the two following conditions hold:
\begin{enumerate}
    \item For every $c \ne a$, $c \in F^R(P)$ if and only if $c \in F^R(P')$
    \item $a \in F^R(P)$ if and only if  $\left | F^R(P') \cap \{a,a'\} \right | \ge 1$
\end{enumerate}
\end{definition}

$\mavr$ is not clone-proof. Let $P = (\succ, V) = (a|b, ba|, ba|)$;  $\mavr(P) = \{b\}$. 
$P' = (V', \succ') =  (aa'|b, baa'|, baa'|)$ is an $a$-cloning extension of $P$ and yet $ \mavr(P) = maj(\succ', \{a,a'\}) = \{a\}$.
At first sight, $\ccavr$ and $\sccavr$ seem clone-proof, but $P$ can be used to prove that $\ccavr$ and $\sccavr$ are not clone-proof either:
$\ccavr(P)=\sccav(P)= \{b\}$; $P' = $ $(V', \succ')= (aa'|b, baa'|, baa'|)$ is an $a$-cloning extension of $P$; and yet $\ccav(V') =\sccav(V') = \{(b,a),(b,a'),(a,a')\}$, so $\ccavr(P')=\sccavr(P')$ contains $a$, 
breaking clone-proofness.

\par 
Among the rules considered in Section \ref{sec:rules}, none is actually clone-proof. However, there exist AVR clone-proof rules: such a rule is defined by the ABC rule that selects the pairs of candidates maximizing $f(x_1,x_2) = S_V(x_1) + S_V(x_2) - 2 S_V(x_1x_2)$.
However, this rule is not Pareto-efficient. More generally, clone-proofness and Pareto-efficiency are incompatible:

\begin{theorem}
No AVR rule is clone-proof and Pareto-efficient.
\end{theorem}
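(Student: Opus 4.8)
The plan is to prove this by constructing a small explicit profile that forces any Pareto-efficient AVR rule into a contradiction with clone-proofness, mirroring the technique used for Theorem~\ref{thm:strategy-proof}. Let $F$ be an approval-based 2-committee rule whose associated AVR rule $F^R$ is Pareto-efficient, and suppose for contradiction that $F^R$ is also clone-proof. The idea is to start from a base profile $P$ on three candidates $\{a,b,c\}$ for which Pareto-efficiency pins down $F(V)$ almost completely: in particular, we want a configuration where $b$ is dominated by $a$ so $\{b,x\}\notin F(V)$ for all $x$, which forces $F(V)$ to consist only of pairs among $\{a,c\}$, and moreover the majority relation is arranged so that $F^R(P)$ is a specific singleton, say $\{a\}$ (so $c$ is either not selected, or loses to $a$ in the runoff, or $\{a,c\}$ is the only selected pair).

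Next I would clone one of the candidates --- the natural choice is to clone $c$ into $c'$. In the cloning extension $P'$ over $\{a,b,c,c'\}$, clone-proofness forces two things: first, $b\notin F^R(P')$ (since $b\notin F^R(P)$ and $b\ne c$), and second, $a\in F^R(P')$ iff $a\in F^R(P)$, which holds, so $a$ must still win. But Pareto-efficiency applied to $P'$ should eliminate one of $c,c'$ --- the point of the construction is to rig the ballots and rankings so that in $P'$ one clone, say $c'$, becomes dominated by $c$ (or by $a$), forcing $\{c',x\}\notin F(V')$ for every $x$, while simultaneously the only way for $a$ to be a winner of $F^R(P')$ is via a pair involving $c'$ (because the runoff between $a$ and $c$ goes to $c$, but the split of votes among the clones changes the pairwise comparison, or because $F$ on $V'$ would only select pairs involving one of the clones). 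That tension --- $a$ must win, but every pair that would let $a$ win is killed by Pareto --- is the contradiction.

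Realising this cleanly is the main obstacle: in a cloning extension the approval ballots of $c$ and $c'$ are identical by definition, so if $c$ dominates $c'$ we need $c\succ_i c'$ for all $i$ with the approval inclusion being non-strict in the required direction --- but condition~2 of cloning says $c\succ_i x \iff c'\succ_i x$ only for $x\notin\{c,c'\}$, so voters are free to order $c$ and $c'$ arbitrarily relative to each other, and we can indeed make all voters rank $c\succ c'$. The delicate point is then to also control the \emph{pairwise majority} outcomes: $maj(\succ',\{a,c\})$, $maj(\succ',\{a,c'\})$, and $maj(\succ',\{c,c'\})$, and to ensure that the base profile $P$ genuinely forces $F^R(P)=\{a\}$ via Pareto alone (not via any property specific to a rule). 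I would likely need $P$ to already have $a$ Condorcet-beating $b$ and $c$, or at least beating $c$, and then arrange that after cloning the majority between $a$ and $c$ flips --- this is exactly the classical Condorcet-loser/spoiler effect of clones on a majority runoff, and it is what makes clones dangerous here.

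Concretely I would try a profile in the spirit of $P=(a|bc, c|ab, \dots)$ with enough $abc|$ voters so that $a$ dominates $b$, arranged so that $F(V)\subseteq\{\{a,c\}\}$ and $a$ beats $c$ in the runoff, hence $F^R(P)=\{a\}$; then in the $c$-cloning extension, distribute the pro-$c$ support so that $c$ dominates $c'$ (all voters rank $c\succ c'$, identical approvals) while $a$ now loses its majority contest against $c$ --- at which point Pareto kills every pair containing $c'$ and every pair containing $b$, leaving only $\{a,c\}$, whose runoff winner is $c\ne a$, so $a\notin F^R(P')$, contradicting clause~2 of clone-proofness. If the clone of $c$ does not produce the required flip, the fallback is to instead clone $a$ and exploit clause~2 in the other direction (forcing $|F^R(P')\cap\{a,a'\}|\ge 1$ while Pareto forbids the only pairs that could supply such a winner); I expect one of these two cloning targets to work, and verifying the domination relations and the three pairwise majorities in the chosen profile is the routine-but-careful part.
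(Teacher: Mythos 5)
Your plan has a gap that cannot be patched within the strategy you describe. Two of its load-bearing steps are impossible under the paper's definitions. First, a clone can never unanimously preference-approval dominate its twin: domination requires some voter who approves one candidate but not the other, while condition~1 of a cloning extension forces $c\in A_i \Leftrightarrow c'\in A_i$ for every voter. So ``rig the profile so that $c$ dominates $c'$'' is unachievable, and Pareto-efficiency can never be used to kill the pairs containing one clone but not the other. (Your parenthetical fallback ``or by $a$'' fails too: by condition~2 of cloning, $a$ dominates $c'$ in $P'$ iff $a$ dominates $c$ in $P$, so this would already have eliminated $c$ in the base profile.) Second, cloning leaves every voter's relative ranking of two \emph{original} candidates untouched, so $maj(\succ',\{a,c\})=maj(\succ,\{a,c\})$; the majority contest between $a$ and $c$ cannot flip. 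The spoiler effect of clones in plurality-with-runoff lives entirely in the first round (vote splitting in the selection of finalists), not in the head-to-head, and your argument needs the latter. More structurally: within a single pair $(P,P')$ of a profile and its cloning extension, the set of Pareto-dominated candidates and all relevant majorities are ``the same'' on both sides, so no contradiction between the two axioms can be extracted there.

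The idea you are missing is the one the paper's proof turns on: the finalist-selection rule $F$ sees only the approval profile, so after clone-proofness pins down which pairs lie in $F(V')$ for a genuine cloning extension $P'$, one is free to pass to a \emph{different} preference profile $P^*$ with the same approval ballots ($V^*=V'$) but re-arranged rankings --- a profile that need not itself be a cloning extension of anything. The paper starts from the two-candidate profile $P=(a|b,\,ba|,\,ba|)$, where trivially $F^R(P)=\{b\}$, clones $a$, deduces that $\{a,b\}$ or $\{a',b\}$ is in $F(V')$ (else $b$ could not remain a winner), and then re-ranks to $P^*=(aa'|b,\,a'ba|,\,a'ba|)$: the approvals are unchanged, so the same pair is still selected and $b$ still wins its runoff against $a$, yet now $a'$ dominates $b$ --- the clone dominates the \emph{third} candidate, not its twin. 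That re-ranking step is what converts the clone-proofness constraint into a Pareto violation, and without it (or some substitute exploiting the approval-only dependence of $F$) your construction cannot close.
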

\begin{proof}
Assume $F^R$ is clone-proof and Pareto-efficient.
Take the profile $P = (V, \succ) = (a|b, ba|, ba|)$ and let $P' = (V', \succ') = (aa'|b, baa'|, baa'|)$ be an $a$-cloning extension of $P$. Because $F^R(P) = \{b\}$, by clone-proofness we have $F^R(P') = \{b\}$, therefore, $\{a,b\} \in F(V')$ or $\{a',b\} \in F(V')$. Without loss of generality, assume $\{a,b\} \in F(V')$. Let $P^* = (V^*, \succ^*) = (aa'|b, a'ba|, a'ba|)$ in which $a'$ dominates $b$. Since $V^* = V'$, $\{a,b\} \in F(V^*)$ and thus $b \in F^R(P^*)$, which contradicts Pareto-efficiency.
\end{proof}

\par 
We now define a weaker version of clone-proofness, with a domain restriction that eliminates pathological profiles:
\begin{definition}
An AVR rule $F^R$ is \textbf{weakly clone-proof} if it is clone-proof on every profile $P$ such that no candidate $c\in \mathcal C$ is approved in every non-empty ballot: for every $c \in \mathcal C$, there exists a voter $v_i$ such that $A_i \ne \emptyset$ and $c \not \in A_i$. 
\end{definition}
CCAV$^{R}$ and S-CCAV$^{R}$ are weakly clone-proof (Proposition \ref{thm:rule_cloneproof}). Recall that they are not Pareto-efficient. A rule weakly clone-proof and Pareto-efficient is $F^R$, where $F$ selects the CCAV finalists, and uses MAV as a tie-breaking if there are several pairs of finalists.
\par 
Unfortunately, we have the following impossibility:

\begin{theorem} \label{thm:cloneproof}
No AVR rule is monotonic, weakly clone-proof.
\end{theorem}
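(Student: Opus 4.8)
The plan is to argue by contradiction: assume an AVR rule $F^R$ is both monotonic and weakly clone-proof, and derive a contradiction by tracking its behaviour on a small, explicitly constructed family of profiles, in the spirit of the proof of Theorem~\ref{thm:strategy-proof} and of the preceding impossibility. The starting observation is that, although weak clone-proofness does not determine $F^R$ in general, it does determine it on the profiles we need: on any two-candidate profile the only admissible $2$-committee is the whole candidate set, so there $F^R$ coincides with the majority rule; and as long as every profile we look at is \emph{nice} (no candidate approved in every non-empty ballot), clone-proofness transfers this knowledge to cloning extensions of two-candidate profiles, and then to iterated cloning extensions.

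Concretely, I would build a short chain of nice profiles, each obtained from the previous one either as a cloning extension or by a single-candidate improvement, together with a block of ``dummy'' voters with full ballots whose rankings serve only to fix the pairwise majority relation (this is the role of the ten $abc|$ voters in the proof of Theorem~\ref{thm:strategy-proof}; full ballots preserve clone structure and, with enough of them, override the active voters' majorities, while never making any other candidate approved-by-everyone). The cloned candidates are what let us control who may win: if $P$ is a cloning extension of $P_0$ with clone pair $\{c,c'\}$ and $c$ is not a winner in the smaller profile $P_0$, then by clone-proofness neither $c$ nor $c'$ is a winner in $P$, hence any finalist pair of $P$ that contains $c$ or $c'$ is one that $c$, resp.\ $c'$, loses --- the exact analogue of the Pareto-based deductions ``$\{b,c\}\not\in F(V)$'' used in the earlier proofs.

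The core of the argument is then a single monotonicity step. I would pick a nice cloning extension $P$ of a two-candidate profile exhibiting a majority \emph{tie}, so that weak clone-proofness forces some winner $w$ of $P$ to be one of the cloned candidates; I would then apply a carefully chosen $w$-improvement producing a profile $P'$ that is still nice and still a cloning extension, but of a \emph{different} two-candidate reduction, one in which the majority relation now makes $w$ a loser. Monotonicity yields $w\in F^R(P')$, whereas weak clone-proofness plus the majority verdict on that reduction yields $w\notin F^R(P')$: contradiction. Since the winner $w$ was produced by a tie, $F^R(P)$ could contain either clone (or both), so the argument splits into the corresponding cases and the symmetric sub-argument is run in each, just as the proof of Theorem~\ref{thm:strategy-proof} splits on whether a given pair lies in $F(V'')$.

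I expect the main difficulty to be bookkeeping of two kinds. First, every profile in the chain must stay nice: this is exactly why the profile $(a|b,\,ba|,\,ba|)$ used for the unrestricted clone-proofness impossibility cannot be recycled (there $a$ is approved in every non-empty ballot), and it is what forces the dummy voters and a somewhat larger candidate set. Second, the improvement step must be designed so that it does \emph{not} commute with the clone-reduction --- otherwise monotonicity applied to the reduced profiles would make $w\in F^R(P')$ unavoidable and no contradiction could arise. Consequently the tracked winner $w$ must itself be one of the clones (equivalently, the improvement must perturb the clone structure), and verifying that each step is a legal $w$-improvement that simultaneously preserves ballot consistency, niceness, and the relevant clone relations is where the real work lies.
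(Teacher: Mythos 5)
Your high-level plan has the right ingredients (dummy full-ballot voters to control the majorities, niceness bookkeeping, two-candidate profiles on which $F^R$ reduces to majority rule), but the core step cannot be realized, and the obstruction is not the ``commutation'' issue you flag --- it is that pairwise majority margins are themselves monotone under improvements. Spelling it out: a $w$-improvement only ever moves $w$ upward in one voter's ranking (and possibly adds $w$ to her ballot), so every majority margin of $w$ against any other candidate weakly improves and every margin not involving $w$ is unchanged. In your setup $P$ is a cloning extension of a two-candidate profile with a majority \emph{tie}, $w$ is a winner of $P$ lying in the clone class, and you want a $w$-improvement $P'$ whose two-candidate reduction $Q$ turns the relevant candidate into a strict majority loser, so that weak clone-proofness yields $w \notin F^R(P')$. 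But whichever reduction of $P'$ you consider --- deleting $w$'s clone, so that the decisive comparison is $maj(w,y)$, which has weakly improved from a tie; or deleting $w$ itself, or re-pairing the clones, so that the decisive comparison is unchanged from $P$, where it was a tie by construction --- the clone class containing $w$ still at least ties in $Q$, and weak clone-proofness therefore never excludes $w$ (or its clone) from $F^R(P')$. No contradiction can arise from this single ``tie-then-improve'' step, for any choice of profiles.

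The paper's proof sidesteps this by never asking the improved candidate to become a majority loser. It applies monotonicity to intermediate profiles that are \emph{not} cloning extensions, and exploits the fact that $F$ depends only on the approval part so that the rankings can be reassigned between steps; this is how first $a$ and then $b$ are made winners and improved in turn, moving the approval profile from $(\{a\},\{b\},\{c\},10\times\emptyset)$ to $(\{a,b\},\{a,b\},\{c\},10\times\emptyset)$ while keeping the \emph{pair} $\{a,b\}$ in $F(V)$ throughout --- the tracked object is a finalist pair, not a winning candidate. Only at the very end is the resulting approval profile recognized as an $a$-cloning extension of the two-candidate profile $(\{a\},\{a\},\{c\},10\times\emptyset)$, in which $a$ loses to $c$ by $11$ to $2$, a deficit the two small improvements cannot overturn; weak clone-proofness then excludes both clones from winning, hence excludes $\{a,b\}$ from being a finalist pair, contradicting what monotonicity established. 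To salvage your argument you would need to restructure it along these lines; as written, its central step is a dead end.
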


\begin{proof}

\begin{table}[!t]
    \centering
    \begin{tabular}{|c|c|c|c|c|} \hline
      &  $v_1$& $v_2$& $v_3$  & finalists\\ \hline 
      $V$ &  $\{a\}$ & $\{b\}$ & $\{c\}$ &  $\{a,b\} \in F(V)$\\ \hline 
      $V' = V''$ & $\{a\}$ & $\{a,b\}$ & $\{c\}$ & $\{a,b\} \in F(V')$ \\ \hline 
      $V^*$ & $\{a,b\}$ & $\{a,b\}$ & $\{c\}$ & $\{a,b\} \in F(V^*)$ \\ \hline
      $\widehat{V}$ & $\{a\}$ & $\{a\}$ & $\{c\}$ & $F(\widehat{V}) = \{a,c\}$ \\ \hline
      $\widehat{V}^*$ & $\{a, a^*\}$ & $\{a, a^*\}$ & $\{c\}$ & $\{a,a^*\} \not \in F(\widehat{V}^*)$ \\ \hline
    \end{tabular}
    \caption{Profiles used in proof of Theorem \ref{thm:cloneproof}}
    \label{tab:proof_thm_cloneproof}
\end{table}
Assume $F^R$ is monotonic and weakly clone-proof
Let $P = (V,\succ) = (a|bc, b|ca, c|ab, 10 \times cab|)$. We can assume without loss of generality $\{a,b\} \in F(V)$, so 
$a \in F^R(P)$.  $P' = (V',\succ') = (a|bc, b\underline{a}|c, c|ab, 10 \times cab|)$ is an $a$-improvement of $P$. By monotonicity, $a \in F^R(P')$. Because $maj(\succ',\{a,c\}) = c$, this implies $\{a,b\} \in F(V')$.
\par 
Let $P'' = (V'', \succ'') = ( a|bc, ba|c, c|ab, 10 \times cba|)$. Since $V'' = V'$ we have $\{a,b\} \in F(V'')$, and $b \in F^R(P'')$. Let $P^* = (V^*, \succ^*) = (a\underline{b}|c, ab|c, c|ab, 10 \times cba|)$ a $b$-improvement of $P''$. By monotonicity, $b \in F^R(P^*)$. Since $maj(\succ^*, \{b,c\}) = c$, we have $\{a,b\} \in F(V^*)$.
\par 

Now consider $\widehat{P} = (\widehat{V}, \widehat{\succ}) = (a|c, a|c, c|a, 10 \times ca|)$. We have $F^R(\widehat{P}) = \{c\}$. If we clone $a$ into another candidate $b$, we can define the $a$-cloning extension $\widehat{P}^* = (\widehat{V}^*, \widehat{\succ}^*) = (ab|c, ab|c, c|ab, 10 \times cba|)$. By weak clone-proofness, neither $a$ nor $b$ is in $F^R(\widehat{P}^*)$, so $\{a,b\} \notin F(\widehat{V}^*)$. We obtain a contradiction and this concludes the proof.
\end{proof}

This set of properties is minimal: MAV$^R$ is monotonic and neutral; CCAV$^R$ is weakly clone-proof and neutral; a rule with a constant pair of finalists is weakly clone-proof and monotonic.

\begin{proposition}\label{thm:rule_cloneproof}
CCAV$^R$ and S-CCAV$^R$ are weakly clone-proof but not AV$^R$, S-PAV$^R$, S-Phr$^R$, EnePhr$^R$, PAV$^R$ and SAV$^R$ are not.
\end{proposition}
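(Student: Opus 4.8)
The plan is to treat the two halves of the statement separately.

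\textit{Positive half (CCAV$^R$ and S-CCAV$^R$ are weakly clone-proof).} I would first rewrite both rules in combinatorial form for $k=2$: CCAV maximizes the coverage $g_V(\{x,y\})=|\{i:A_i\cap\{x,y\}\ne\emptyset\}|=S_V(x)+S_V(y)-S_V(xy)$, and S-CCAV picks an approval winner $x_1$ and then maximizes the ``exclusive score'' $h_V(x_1;x)=|\{i:x\in A_i,\ x_1\notin A_i\}|=S_V(x)-S_V(x_1x)$. Fix a profile $P=(V,\succ)$ in the restricted domain, a candidate $a$, and an $a$-cloning extension $P'=(V',\succ')$ with new clone $a'$. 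There are three ingredients. (i) Once $a'$ is identified with $a$, the approval scores, the function $g$ and the function $h$ are unchanged by cloning, so the set of attainable values and its maximum agree in $V$ and $V'$. (ii) Each voter's $P'$-ranking is her $P$-ranking with $a'$ inserted adjacent to $a$, so $maj(\succ',\{x,y\})=maj(\succ,\{x,y\})$ for $x,y\in\mathcal C$ and any majority comparison involving $a'$ mirrors the corresponding one involving $a$. (iii) The key step: the pair $\{a,a'\}$ is never selected in $V'$, because the domain restriction provides a voter with a non-empty ballot not containing $a$, hence a candidate $y^\star\notin\{a,a'\}$ with $g_{V'}(\{a,y^\star\})=g_V(\{a,y^\star\})\ge S_V(a)+1>S_V(a)=g_{V'}(\{a,a'\})$, and similarly $h_{V'}(a;y^\star)\ge 1>0=h_{V'}(a;a')$ (and symmetrically with $a'$ as first finalist). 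Combining (i)--(iii), the winning pairs of $F(V')$ are exactly those of $F(V)$ with $a'$ optionally substituted for $a$, and runoff outcomes are preserved; then a short case check on whether the runoff survivor is $a$, $a'$, or a third candidate gives both clauses of clone-proofness.

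\textit{Negative half.} I would exhibit a single small witness. Let $\mathcal C=\{a,b\}$ and $P=(3\times a|b,\ 1\times b|a,\ 3\times ba|)$: three ballots $\{a\}$ ranked $a\succ b$, one ballot $\{b\}$ ranked $b\succ a$, three ballots $\{a,b\}$ ranked $b\succ a$; this lies in the restricted domain ($\{b\}$ omits $a$, $\{a\}$ omits $b$). Here $S_V(a)=6>4=S_V(b)$, so $a$ is the approval winner, each rule in the list sends $\{a,b\}$ to the runoff, and $b$ wins the runoff $4$--$3$; hence $F^R(P)=\{b\}$ and $a\notin F^R(P)$. Cloning $a$ into $a'$ gives $S_{V'}(a)=S_{V'}(a')=6$, $S_{V'}(b)=4$, $S_{V'}(aa')=6$, $S_{V'}(ab)=S_{V'}(a'b)=3$ and $Sp_{V'}(a)=Sp_{V'}(a')=3\cdot\tfrac12+3\cdot\tfrac13=\tfrac52>2=Sp_{V'}(b)$. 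A direct computation shows that for MAV, PAV, S-PAV, S-Phr, SAV, and for EnePhr with a quota $Q$ whose induced discount $\alpha=\min(1,Q/6)$ is at most $2/3$ --- in particular the Droop quota $n/3=7/3$ and the Hare quota $n/2=7/2$ --- the unique pair of finalists of $V'$ is $\{a,a'\}$. Consequently $F^R(P')\subseteq\{a,a'\}$, so $b\notin F^R(P')$ although $b\in F^R(P)$ (breaking clause~1) and $|F^R(P')\cap\{a,a'\}|\ge 1$ although $a\notin F^R(P)$ (breaking clause~2); either way $F^R$ is not weakly clone-proof.

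\textit{Expected obstacle.} The delicate point is EnePhr, which is a quota-indexed family: for $Q\ge S_V(x_1)$ it coincides with S-CCAV, which the first half shows \emph{is} weakly clone-proof, so the negative claim for EnePhr is only meaningful for moderate quotas, and one must check that the witness above produces a small enough $\alpha$ for the standard (Droop, Hare) choices (a straightforward rescaling, inflating the number of $\{a\}$ ballots so that $S_V(a)$ dwarfs $Q$, handles other quotas). The remaining work is the bookkeeping in the positive half --- matching winning pairs and runoff outcomes across the few cases (first finalist $a$, $a'$, or a candidate of $\mathcal C$; second finalist $a'$ or a candidate of $\mathcal C$) --- which is routine once ingredients (i)--(iii) are in place.
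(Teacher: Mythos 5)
Your proposal is correct and follows essentially the same route as the paper: the positive half uses the same invariance of the Chamberlin--Courant (and exclusive) scores under cloning plus the domain restriction to rule out the pair $\{a,a'\}$, and the negative half uses a structurally identical two-candidate profile (a $b$-preferring majority of $\{a,b\}$-approvers whose clone pair $\{a,a'\}$ crowds out $b$ after cloning). The only cosmetic difference is in handling EnePhr, where you verify the Droop and Hare quotas directly on the small witness and note the rescaling trick, while the paper invokes a parametrized family $P_k$ with $k$ large; both resolve the same obstacle you correctly identify.
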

\begin{proof}
Let's show that S-CCAV$^R$ and CCAV$^R$ are weakly clone-proof. Let $P = (V,\succ)$ be a profile in which there is no candidate $c$ that is approved in every non-empty ballot. Let $a \in \mathcal C$ and $P' = (V', \succ')$ be an $a$-cloning extension of $P$ with a clone $a'$ of $a$. Let $f_V(x,y)$ be the Chamberlin-Courant score for two candidates $x$ and $y$ in $V$, i.e. $f_V(x,y) = S_V(x) + S_V(y) - S_V(xy)$. For CCAV, we have for all $ x,y \notin \{a,a'\}, f_{V'}(x,y) = f_V(x,y)$ and $f_{V'}(x,a) = f_{V'}(x,a') = f_V(x,a)$. Finally, $f_{V'}(a,a') = S_V(a) + S_V(a') - S_V(aa') = S_V(a)$. Let $w$ be an approval winner in $V$. Let's prove that $\{a,a'\} \notin (S-)\ccav{(V')}$
\begin{itemize}
    \item If $a$ is not an approval winner in $V$, then $\{a,a'\} \notin \sccav{(V')}$ for sure. For CCAV, we have $f_{V'}(w, a) = f_V(w,a) \ge S_V(w) > S_V(a) = f_{V'}(a,a')$, thus $\{a,a'\} \notin \ccav{(V')}$.
    \item If $a$ is the approval winner in $P$, since there is no candidate $c \in \mathcal C$ that is approved in every non-empty ballot, there exists a candidate $x$ and a $A_i \in V$ such that $x \in A_i$ and $a \notin A_i$. Thus, $f_{V'}(x,a) = f_V(x,a) = S_V(x) + S_V(a) - S_V(xa) \ge S_V(a) + 1 > S_V(a) = f_{V'}(a,a')$, thus $\{a,a'\} \notin \ccav{(V')}$.
\end{itemize} 
In all cases, $\{a,a'\} \notin$ (S-)CCAV($V'$) and the Chamberlain-Courant score of all other pairs is unchanged, so it is easy to observe that for all $ x \notin \{a,a'\}$, $\{x,a'\} \in F(V')$ if and only if $\{x,a\} \in F(V)$ and for all $ x,y \ne a', \{x,y\} \in F(V')$ if and only if $ \{x,y\} \in F(V)$.
Therefore, for all $x \ne a', x \in F^R(P')$ if and only if $x \in F^R(P)$ and $a' \in F^R(P') \Leftrightarrow a \in F^R(P)$. Thus, CCAV is weakly clone-proof.
\par
Let's now show a counter-example for other rules. Consider the profile $P = (V, \succ) = ( a|b, a|b, b|a, 10 \times ba|)$. Note that $P$ satisfies the condition of the definition of weak clone-proofness. Since there are only two candidates, for every rule $F$ we have  $F(V) = \{a,b\}$ and $F^R(P) = maj(\succ, \{a,b\}) = \{b\}$. Let's now clone $a$ into another candidate $a'$. We have the $a$-cloning extension $P' = (V', \succ') = ( aa'|b, aa'|b, b|aa', 10 \times baa'|)$. For all rules $F\in \{$MAV, PAV, S-PAV, S-Phr, SAV\}, we have $\{a,a'\} \in F(V')$ and $a \in F^R(P')$, thus breaking clone-proofness. For EnePhr with a quota $Q$ such that $\frac{Q}{n} <1$, we use the profile $P_k = (V_k, \succ_k) = ( b|a, k \times a|b, k \times ba|)$ with a sufficiently high $k$.
\end{proof}
\par 
\begin{table}[!t]
    \centering
    \begin{tabular}{|c|c|c|c|c|c|c|c|c|c|}
    \hline
        & AV$^{R}$ & S-PAV$^{R}$ & S-Phr$^{R}$& EnePhr$^{R}$ & S-CCAV$^{R}$ &PAV$^{R}$ &  CCAV$^{R}$ &  SAV$^{R}$ & TRIV$^{R}$ \\\hline\hline
        Pareto-efficient & \checkmark & \checkmark & \checkmark & \checkmark &  & \checkmark &  & \checkmark &  \\\hline
        Monotonic & \checkmark & && & & & & & \checkmark \\ \hline
        Strategy-proof & & & && & & & & \checkmark \\ \hline 
        Weakly Clone-proof & & & && \checkmark  && \checkmark & & \\ \hline 
        
    \end{tabular}
    \caption{Approval with Runoff rules and their properties}
    \label{tab:axioms_afterrunoff}
\end{table}

\par 
Finally, Table \ref{tab:axioms_impossibilities} summarizes the impossibilities between the different properties of this section and proposes a voting rule when two properties are possible together.

\begin{table}[!t]
    \centering
    \begin{tabular}{|c|c|c|c|c|c|}
    \hline
       & Strategy-proof & Monotonic & Clone-proof & Weakly clone-proof \\ \hline
       Pareto-efficient &  $\emptyset$ & MAV$^R$ & $\emptyset$ & CCAV$_+^R$ \\ \hline
       Strategy-proof &  & TRIV$^R$ & $\emptyset$ & $\emptyset$\\ \hline
       Monotonic &  && $\emptyset$& $\emptyset$ \\ \hline
       Clone-Proof & & && $2$-AV$^R$ \\ \hline
       
    \end{tabular}
    \caption{Summary of impossibilities between properties (assuming neutrality and anonymity). $\emptyset$ means there is no rule satisfying both properties. Otherwise, we propose a rule satisfying both properties --- note that this is not a characterization.}
    \label{tab:axioms_impossibilities}
\end{table}

\par 
In this table, CCAV$_+^R$ is the rule that select the CCAV winners and solve ties using MAV. $2$-AV is the $\alpha$-AV rule with $\alpha = 2$, i.e. selecting the pairs of finalists $\{x,y\}$ maximizing $f(x,y) = S_V(x) + S_V(y) - 2S_V(xy)$. This way, the score of two clones $a$ and $a'$ is $f(a,a') = 0$. This is the reason why this rule is always clone-proof.
\par 
Let's show that CCAV$_+^R$ is weakly clone-proof and pareto-efficient. For weak clone-proofness, we can directly adapt the proof for CCAV$^R$. For pareto-effiency, consider a profile $P = (V,\succ)$ in which $a$ dominates $b$. It is clear that for all $x \ne a,b$, if $\{x,b\} \in $CCAV$(V)$, then $\{x,a\} \in $CCAV$(V)$. However, $\{x,a\}$ has a better approval score than $\{x,b\}$, so $\{x,b\}$ will never be returned by CCAV$_+^R$ and $b$ cannot be a winner. Thus, the rule is Pareto-efficient.
\par

\section{Statistical Analysis with one-dimensional Euclidean preferences}
\label{sp}

We now want to explore the spectrum between rules that select the most popular candidates, typically MAV, and rules that favour diversity in the set of finalists, typically 
CCAV and S-CCAV or, to a lesser extent, PAV and S-PAV. 

In this section we focus on one-dimensional Euclidean preferences:
we assume that there is a function $\phi : \mathcal V \cup \mathcal C \rightarrow \mathbb R$ 
such that
$c \succ_i c'$ if $|\phi(c) -\phi(v_i)| < |\phi(c') -\phi(v_i)|$. 
Such uni-dimensional preferences are extremely common in the literature and, indeed, capture an important political structure often met in actual politics.

We also need to define the voters approval behavior. Unfortunately, the literature (empirical or theoretical) on this issue is sparse. 
\citep{Green-ArmytageT20} use a threshold for approval that is the average utility but do not offer any justification for that assumption. 
Classical political science \citep{10.2307/2604342, Cox1984StrategicEC, cox_1997} argues that in the first round of a run-off, like in a two-member district, the voter concentrates on which candidate is going to arrive second and third (this is in practice the more pressing question) and will vote on this basis.
More precisely, \citep{LaslierVDS16} put forward theoretical arguments that imply putting the threshold of approval at the utility level of the candidate supposed to arrive second in the race. From an empirical study, \citep{VanderStraetenetal18} conclude that the strategic model behaves well except with respect to candidates who are ex post ranked very low. For models that set individual thresholds of approval, the ones that impose a fixed number of approved candidates work well if this number is of the order of magnitude of the number of seats to be allocated, and the average utility model is the least satisfactory. 

For the simulations in this paper, we will assume that there exists $d > 0$ such that every voter $v_i$ approves all candidates $c$ such that $|\phi(c) -\phi(v_i)| < d$. 
This simple one-parameter family of rules will allow to discuss in a clean manner the collective consequences of voters being more or less flexible in their approvals.

Given a distribution of voters, we want to know in which position a candidate can maximize his score with $\alpha$-seqAV rules.

\begin{figure*}[!t]
    \centering
    \subfloat[Triangular distribution]{
\begin{tikzpicture}

\definecolor{color0}{rgb}{0.12156862745098,0.466666666666667,0.705882352941177}

\begin{axis}[
legend style={fill opacity=0.8, draw opacity=1, text opacity=1, draw=white!80!black},
tick align=outside,
tick pos=left,
title={Distribution of voters},
x grid style={white!69.0196078431373!black},
xmin=-1, xmax=1,
xtick style={color=black},
y grid style={white!69.0196078431373!black},
ymin=0, ymax=0.5,
ytick style={color=black},
width=8.5cm,
height=4.5cm,
]
\path [draw=none, fill=color0]
(axis cs:-1,0)
--(axis cs:0,0.5)
--(axis cs:1,0)
--cycle;
\end{axis}

\end{tikzpicture}
    \label{fig:distrib-0}
    }
    \subfloat[Gaussian distribution]{
\begin{tikzpicture}

\definecolor{color0}{rgb}{0.12156862745098,0.466666666666667,0.705882352941177}

\begin{axis}[
tick align=outside,
tick pos=left,
title={Distribution of voters},
x grid style={white!69.0196078431373!black},
xmin=-2.29779738313012, xmax=2.21805231942632,
xtick style={color=black},
ytick=\empty,
y grid style={white!69.0196078431373!black},
ymin=0, ymax=1387.05,
ytick style={color=black},
width=8.5cm,
height=4.5cm,
]
\draw[draw=none,fill=color0] (axis cs:-2.09253148755937,0) rectangle (axis cs:-2.01042512933107,1);
\draw[draw=none,fill=color0] (axis cs:-2.01042512933107,0) rectangle (axis cs:-1.92831877110277,0);
\draw[draw=none,fill=color0] (axis cs:-1.92831877110277,0) rectangle (axis cs:-1.84621241287448,1);
\draw[draw=none,fill=color0] (axis cs:-1.84621241287448,0) rectangle (axis cs:-1.76410605464618,3);
\draw[draw=none,fill=color0] (axis cs:-1.76410605464618,0) rectangle (axis cs:-1.68199969641788,2);
\draw[draw=none,fill=color0] (axis cs:-1.68199969641788,0) rectangle (axis cs:-1.59989333818958,6);
\draw[draw=none,fill=color0] (axis cs:-1.59989333818958,0) rectangle (axis cs:-1.51778697996128,15);
\draw[draw=none,fill=color0] (axis cs:-1.51778697996128,0) rectangle (axis cs:-1.43568062173298,20);
\draw[draw=none,fill=color0] (axis cs:-1.43568062173298,0) rectangle (axis cs:-1.35357426350468,28);
\draw[draw=none,fill=color0] (axis cs:-1.35357426350468,0) rectangle (axis cs:-1.27146790527638,55);
\draw[draw=none,fill=color0] (axis cs:-1.27146790527638,0) rectangle (axis cs:-1.18936154704808,52);
\draw[draw=none,fill=color0] (axis cs:-1.18936154704808,0) rectangle (axis cs:-1.10725518881979,90);
\draw[draw=none,fill=color0] (axis cs:-1.10725518881979,0) rectangle (axis cs:-1.02514883059149,144);
\draw[draw=none,fill=color0] (axis cs:-1.02514883059149,0) rectangle (axis cs:-0.943042472363187,179);
\draw[draw=none,fill=color0] (axis cs:-0.943042472363187,0) rectangle (axis cs:-0.860936114134888,250);
\draw[draw=none,fill=color0] (axis cs:-0.860936114134888,0) rectangle (axis cs:-0.778829755906589,358);
\draw[draw=none,fill=color0] (axis cs:-0.778829755906589,0) rectangle (axis cs:-0.69672339767829,422);
\draw[draw=none,fill=color0] (axis cs:-0.69672339767829,0) rectangle (axis cs:-0.614617039449991,543);
\draw[draw=none,fill=color0] (axis cs:-0.614617039449991,0) rectangle (axis cs:-0.532510681221693,638);
\draw[draw=none,fill=color0] (axis cs:-0.532510681221693,0) rectangle (axis cs:-0.450404322993394,823);
\draw[draw=none,fill=color0] (axis cs:-0.450404322993394,0) rectangle (axis cs:-0.368297964765095,954);
\draw[draw=none,fill=color0] (axis cs:-0.368297964765095,0) rectangle (axis cs:-0.286191606536796,1080);
\draw[draw=none,fill=color0] (axis cs:-0.286191606536796,0) rectangle (axis cs:-0.204085248308497,1148);
\draw[draw=none,fill=color0] (axis cs:-0.204085248308497,0) rectangle (axis cs:-0.121978890080198,1180);
\draw[draw=none,fill=color0] (axis cs:-0.121978890080198,0) rectangle (axis cs:-0.0398725318518989,1286);
\draw[draw=none,fill=color0] (axis cs:-0.0398725318518989,0) rectangle (axis cs:0.0422338263764002,1321);
\draw[draw=none,fill=color0] (axis cs:0.0422338263764002,0) rectangle (axis cs:0.124340184604699,1319);
\draw[draw=none,fill=color0] (axis cs:0.124340184604699,0) rectangle (axis cs:0.206446542832998,1310);
\draw[draw=none,fill=color0] (axis cs:0.206446542832998,0) rectangle (axis cs:0.288552901061297,1137);
\draw[draw=none,fill=color0] (axis cs:0.288552901061297,0) rectangle (axis cs:0.370659259289595,1013);
\draw[draw=none,fill=color0] (axis cs:0.370659259289595,0) rectangle (axis cs:0.452765617517894,1008);
\draw[draw=none,fill=color0] (axis cs:0.452765617517894,0) rectangle (axis cs:0.534871975746193,803);
\draw[draw=none,fill=color0] (axis cs:0.534871975746193,0) rectangle (axis cs:0.616978333974493,677);
\draw[draw=none,fill=color0] (axis cs:0.616978333974493,0) rectangle (axis cs:0.699084692202792,546);
\draw[draw=none,fill=color0] (axis cs:0.699084692202792,0) rectangle (axis cs:0.78119105043109,460);
\draw[draw=none,fill=color0] (axis cs:0.78119105043109,0) rectangle (axis cs:0.863297408659389,327);
\draw[draw=none,fill=color0] (axis cs:0.863297408659389,0) rectangle (axis cs:0.945403766887688,225);
\draw[draw=none,fill=color0] (axis cs:0.945403766887688,0) rectangle (axis cs:1.02751012511599,185);
\draw[draw=none,fill=color0] (axis cs:1.02751012511599,0) rectangle (axis cs:1.10961648334429,126);
\draw[draw=none,fill=color0] (axis cs:1.10961648334429,0) rectangle (axis cs:1.19172284157258,98);
\draw[draw=none,fill=color0] (axis cs:1.19172284157258,0) rectangle (axis cs:1.27382919980088,63);
\draw[draw=none,fill=color0] (axis cs:1.27382919980088,0) rectangle (axis cs:1.35593555802918,38);
\draw[draw=none,fill=color0] (axis cs:1.35593555802918,0) rectangle (axis cs:1.43804191625748,29);
\draw[draw=none,fill=color0] (axis cs:1.43804191625748,0) rectangle (axis cs:1.52014827448578,16);
\draw[draw=none,fill=color0] (axis cs:1.52014827448578,0) rectangle (axis cs:1.60225463271408,7);
\draw[draw=none,fill=color0] (axis cs:1.60225463271408,0) rectangle (axis cs:1.68436099094238,3);
\draw[draw=none,fill=color0] (axis cs:1.68436099094238,0) rectangle (axis cs:1.76646734917068,2);
\draw[draw=none,fill=color0] (axis cs:1.76646734917068,0) rectangle (axis cs:1.84857370739898,4);
\draw[draw=none,fill=color0] (axis cs:1.84857370739898,0) rectangle (axis cs:1.93068006562728,3);
\draw[draw=none,fill=color0] (axis cs:1.93068006562728,0) rectangle (axis cs:2.01278642385557,2);
\end{axis}

\end{tikzpicture}
    \label{fig:distrib-1}
    }
\caption{Distribution of voters for one-dimensional Euclidean preferences}
\end{figure*}
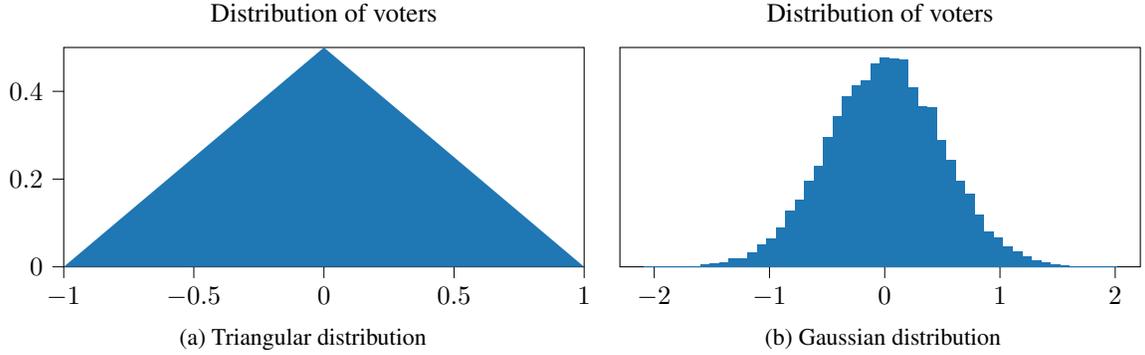

\begin{figure*}[!t]
    \centering
    \subfloat[Triangular distribution (Exact solution) \label{fig:results-0}]{
    \scalebox{0.95}{
\begin{tikzpicture}

\definecolor{color0}{rgb}{0.12156862745098,0.466666666666667,0.705882352941177}
\definecolor{color1}{rgb}{1,0.498039215686275,0.0549019607843137}
\definecolor{color2}{rgb}{0.172549019607843,0.627450980392157,0.172549019607843}
\definecolor{color3}{rgb}{0.83921568627451,0.152941176470588,0.156862745098039}

\begin{axis}[
legend cell align={left},
legend style={
  fill opacity=0.8,
  draw opacity=1,
  text opacity=1,
  at={(0.03,0.97)},
  anchor=north west,
  draw=white!80!black
},
tick align=outside,
tick pos=left,
title={Position of the second finalist for different values of d},
x grid style={white!69.0196078431373!black},
xlabel={$\alpha$},
xmin=0, xmax=1,
xtick style={color=black},
xtick={0,0.2,0.4,0.6,0.8,1},
xticklabels={0.0,0.2,0.4,0.6,0.8,1.0},
y grid style={white!69.0196078431373!black},
ylabel={distance to center},
ymin=0, ymax=1,
ytick style={color=black},
ytick={0,0.2,0.4,0.6,0.8,1},
yticklabels={0.0,0.2,0.4,0.6,0.8,1.0},
width=8.5cm,
height=4.5cm,
]
\addplot [very thick, color0]
table {%
0.01 0.00452261306532663
0.02 0.00909090909090909
0.03 0.0137055837563452
0.04 0.0183673469387755
0.05 0.0230769230769231
0.06 0.0278350515463918
0.07 0.0326424870466321
0.08 0.0375
0.09 0.0424083769633508
0.1 0.0473684210526316
0.11 0.0523809523809524
0.12 0.0574468085106383
0.13 0.0625668449197861
0.14 0.067741935483871
0.15 0.072972972972973
0.16 0.0782608695652174
0.17 0.0836065573770492
0.18 0.089010989010989
0.19 0.094475138121547
0.2 0.1
0.21 0.147619047619048
0.22 0.190909090909091
0.23 0.2
0.24 0.2
0.25 0.2
0.26 0.2
0.27 0.2
0.28 0.2
0.29 0.2
0.3 0.2
0.31 0.2
0.32 0.2
0.33 0.2
0.34 0.2
0.35 0.2
0.36 0.2
0.37 0.2
0.38 0.2
0.39 0.2
0.4 0.2
0.41 0.2
0.42 0.2
0.43 0.2
0.44 0.2
0.45 0.2
0.46 0.2
0.47 0.2
0.48 0.2
0.49 0.2
0.5 0.2
0.51 0.2
0.52 0.2
0.53 0.2
0.54 0.2
0.55 0.2
0.56 0.2
0.57 0.2
0.58 0.2
0.59 0.2
0.6 0.2
0.61 0.2
0.62 0.2
0.63 0.2
0.64 0.2
0.65 0.2
0.66 0.2
0.67 0.2
0.68 0.2
0.69 0.2
0.7 0.2
0.71 0.2
0.72 0.2
0.73 0.2
0.74 0.2
0.75 0.2
0.76 0.2
0.77 0.2
0.78 0.2
0.79 0.2
0.8 0.2
0.81 0.2
0.82 0.2
0.83 0.2
0.84 0.2
0.85 0.2
0.86 0.2
0.87 0.2
0.88 0.2
0.89 0.2
0.9 0.2
0.91 0.2
0.92 0.2
0.93 0.2
0.94 0.2
0.95 0.2
0.96 0.2
0.97 0.2
0.98 0.2
0.99 0.2
1 0.2
};
\addlegendentry{\footnotesize{d = 0.1}}
\addplot [very thick, color1]
table {%
0.01 0.00376884422110553
0.02 0.00757575757575758
0.03 0.0114213197969543
0.04 0.0153061224489796
0.05 0.0192307692307692
0.06 0.0231958762886598
0.07 0.0272020725388601
0.08 0.03125
0.09 0.0353403141361257
0.1 0.0394736842105263
0.11 0.0436507936507937
0.12 0.0478723404255319
0.13 0.0521390374331551
0.14 0.0564516129032258
0.15 0.0608108108108108
0.16 0.0652173913043478
0.17 0.069672131147541
0.18 0.0741758241758242
0.19 0.0787292817679558
0.2 0.0833333333333333
0.21 0.0879888268156425
0.22 0.0926966292134832
0.23 0.0974576271186441
0.24 0.102272727272727
0.25 0.107142857142857
0.26 0.112068965517241
0.27 0.117052023121387
0.28 0.122093023255814
0.29 0.12719298245614
0.3 0.132352941176471
0.31 0.137573964497041
0.32 0.142857142857143
0.33 0.148203592814371
0.34 0.153614457831325
0.35 0.159090909090909
0.36 0.164634146341463
0.37 0.170245398773006
0.38 0.175925925925926
0.39 0.18167701863354
0.4 0.1875
0.41 0.193396226415094
0.42 0.199367088607595
0.43 0.205414012738854
0.44 0.211538461538462
0.45 0.217741935483871
0.46 0.224025974025974
0.47 0.230392156862745
0.48 0.236842105263158
0.49 0.243377483443709
0.5 0.25
0.51 0.269607843137255
0.52 0.288461538461539
0.53 0.306603773584906
0.54 0.324074074074074
0.55 0.340909090909091
0.56 0.357142857142857
0.57 0.37280701754386
0.58 0.387931034482759
0.59 0.402542372881356
0.6 0.416666666666667
0.61 0.430327868852459
0.62 0.443548387096774
0.63 0.456349206349206
0.64 0.46875
0.65 0.480769230769231
0.66 0.492424242424242
0.67 0.5
0.68 0.5
0.69 0.5
0.7 0.5
0.71 0.5
0.72 0.5
0.73 0.5
0.74 0.5
0.75 0.5
0.76 0.5
0.77 0.5
0.78 0.5
0.79 0.5
0.8 0.5
0.81 0.5
0.82 0.5
0.83 0.5
0.84 0.5
0.85 0.5
0.86 0.5
0.87 0.5
0.88 0.5
0.89 0.5
0.9 0.5
0.91 0.5
0.92 0.5
0.93 0.5
0.94 0.5
0.95 0.5
0.96 0.5
0.97 0.5
0.98 0.5
0.99 0.5
1 0.5
};
\addlegendentry{\footnotesize{d = 0.25}}
\addplot [very thick, color2]
table {%
0.01 0.00336683417085427
0.02 0.00676767676767677
0.03 0.0102030456852792
0.04 0.0136734693877551
0.05 0.0171794871794872
0.06 0.0207216494845361
0.07 0.024300518134715
0.08 0.0279166666666667
0.09 0.0315706806282723
0.1 0.0352631578947368
0.11 0.038994708994709
0.12 0.0427659574468085
0.13 0.0465775401069519
0.14 0.0504301075268817
0.15 0.0543243243243243
0.16 0.0582608695652174
0.17 0.0622404371584699
0.18 0.0662637362637363
0.19 0.0703314917127072
0.2 0.0744444444444444
0.21 0.0786033519553073
0.22 0.0828089887640449
0.23 0.0870621468926554
0.24 0.0913636363636364
0.25 0.0957142857142857
0.26 0.100114942528736
0.27 0.104566473988439
0.28 0.10906976744186
0.29 0.113625730994152
0.3 0.118235294117647
0.31 0.122899408284024
0.32 0.127619047619048
0.33 0.132395209580838
0.34 0.137228915662651
0.35 0.142121212121212
0.36 0.147073170731707
0.37 0.152085889570552
0.38 0.15716049382716
0.39 0.162298136645963
0.4 0.1675
0.41 0.172767295597484
0.42 0.178101265822785
0.43 0.183503184713376
0.44 0.188974358974359
0.45 0.194516129032258
0.46 0.20012987012987
0.47 0.205816993464052
0.48 0.211578947368421
0.49 0.217417218543046
0.5 0.223333333333333
0.51 0.229328859060403
0.52 0.235405405405405
0.53 0.24156462585034
0.54 0.247808219178082
0.55 0.254137931034483
0.56 0.260555555555556
0.57 0.267062937062937
0.58 0.273661971830986
0.59 0.280354609929078
0.6 0.287142857142857
0.61 0.294028776978417
0.62 0.301014492753623
0.63 0.308102189781022
0.64 0.315294117647059
0.65 0.322592592592593
0.66 0.33
0.67 0.344925373134328
0.68 0.359411764705882
0.69 0.373478260869565
0.7 0.387142857142857
0.71 0.400422535211268
0.72 0.413333333333333
0.73 0.425890410958904
0.74 0.438108108108108
0.75 0.45
0.76 0.461578947368421
0.77 0.472857142857143
0.78 0.483846153846154
0.79 0.494556962025317
0.8 0.505
0.81 0.515185185185185
0.82 0.525121951219512
0.83 0.534819277108434
0.84 0.544285714285714
0.85 0.553529411764706
0.86 0.562558139534884
0.87 0.571379310344828
0.88 0.58
0.89 0.588426966292135
0.9 0.596666666666667
0.91 0.604725274725275
0.92 0.612608695652174
0.93 0.620322580645161
0.94 0.627872340425532
0.95 0.635263157894737
0.96 0.6425
0.97 0.649587628865979
0.98 0.656530612244898
0.99 0.66
1 0.66
};
\addlegendentry{\footnotesize{d = 0.33}}
\addplot [very thick, color3]
table {%
0.01 0.00251256281407035
0.02 0.00505050505050505
0.03 0.00761421319796954
0.04 0.0102040816326531
0.05 0.0128205128205128
0.06 0.0154639175257732
0.07 0.0181347150259067
0.08 0.0208333333333333
0.09 0.0235602094240838
0.1 0.0263157894736842
0.11 0.0291005291005291
0.12 0.0319148936170213
0.13 0.0347593582887701
0.14 0.0376344086021505
0.15 0.0405405405405405
0.16 0.0434782608695652
0.17 0.046448087431694
0.18 0.0494505494505495
0.19 0.0524861878453039
0.2 0.0555555555555556
0.21 0.058659217877095
0.22 0.0617977528089888
0.23 0.0649717514124294
0.24 0.0681818181818182
0.25 0.0714285714285714
0.26 0.0747126436781609
0.27 0.0780346820809249
0.28 0.0813953488372093
0.29 0.0847953216374269
0.3 0.0882352941176471
0.31 0.0917159763313609
0.32 0.0952380952380952
0.33 0.0988023952095809
0.34 0.102409638554217
0.35 0.106060606060606
0.36 0.109756097560976
0.37 0.113496932515337
0.38 0.117283950617284
0.39 0.12111801242236
0.4 0.125
0.41 0.128930817610063
0.42 0.132911392405063
0.43 0.136942675159236
0.44 0.141025641025641
0.45 0.145161290322581
0.46 0.149350649350649
0.47 0.15359477124183
0.48 0.157894736842105
0.49 0.162251655629139
0.5 0.166666666666667
0.51 0.171140939597315
0.52 0.175675675675676
0.53 0.180272108843537
0.54 0.184931506849315
0.55 0.189655172413793
0.56 0.194444444444444
0.57 0.199300699300699
0.58 0.204225352112676
0.59 0.209219858156028
0.6 0.214285714285714
0.61 0.219424460431655
0.62 0.22463768115942
0.63 0.22992700729927
0.64 0.235294117647059
0.65 0.240740740740741
0.66 0.246268656716418
0.67 0.25187969924812
0.68 0.257575757575758
0.69 0.263358778625954
0.7 0.269230769230769
0.71 0.275193798449612
0.72 0.28125
0.73 0.28740157480315
0.74 0.293650793650794
0.75 0.3
0.76 0.306451612903226
0.77 0.313008130081301
0.78 0.319672131147541
0.79 0.326446280991736
0.8 0.333333333333333
0.81 0.340336134453782
0.82 0.347457627118644
0.83 0.354700854700855
0.84 0.362068965517241
0.85 0.369565217391304
0.86 0.37719298245614
0.87 0.384955752212389
0.88 0.392857142857143
0.89 0.400900900900901
0.9 0.409090909090909
0.91 0.417431192660551
0.92 0.425925925925926
0.93 0.434579439252337
0.94 0.443396226415094
0.95 0.452380952380952
0.96 0.461538461538462
0.97 0.470873786407767
0.98 0.480392156862745
0.99 0.49009900990099
1 0.5
};
\addlegendentry{\footnotesize{d = 0.5}}
\addplot [very thick, black, dashed]
table {%
0.5 0
0.5 1
};
\addlegendentry{\footnotesize{S-PAV}}
\end{axis}

\end{tikzpicture}}
    }
    \subfloat[Gaussian distribution (Empirical solution) \label{fig:results-1}]{
     \scalebox{0.95}{\input{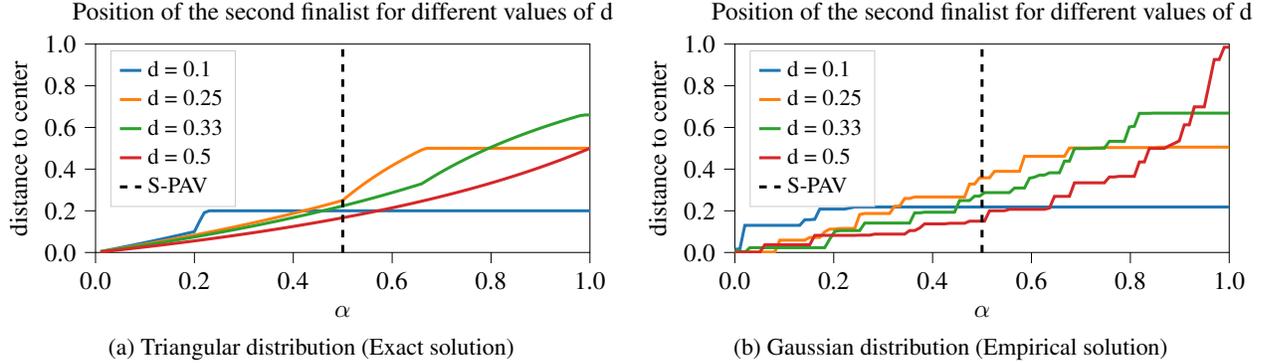}}
    }
\caption{Position of the second finalist for $\alpha$-seqAV rules for various values of the approval radius $d$ and with $\alpha \in [0,1]$. The position corresponds to the distance to the center of the distribution (which is also the position of the first finalist)}
    \label{fig:results-sp}
\end{figure*}

\subsection{Triangular distribution}
We first assume that the distribution of voters follows the following triangular density function $f$: all voters and candidates are located on $[-1,1]$, and for all
$x \in [-1,1]$, $f(x) = \frac{1-|x|}{2}$. The distribution is showed in Figure \ref{fig:distrib-0}. 

Let us consider  $\alpha$-seqAV rules, for $\alpha \in [0,1]$: for all pairs of finalists, one finalist $x_1$ is an approval winner and the other finalist $x_2$ maximizes $S_V(x_2) - \alpha S_V(x_1x_2)$. Recall that $\alpha = 0$ leads to MAV and $\alpha = 1$ to S-CCAV. The first finalist will always be the closest to the middle point of the interval (here $0$). It can be shown that the position in which the second finalist
will get the maximum score for specific $\alpha$ and $d$ is
\[   
|x_2^*| = 
     \begin{cases}
       \frac{\alpha(1-d)}{2-\alpha} & \text{if }  \alpha \le 2d \\
      1+d-\frac{2d}{\alpha} &\text{if } 2d \le \alpha \le \frac{2d}{1-d}\\
      2d &\text{if } \alpha > \frac{2d}{1-d}\\
     \end{cases}
\]
\begin{proof}
We know that the distribution of voters follow the function $f(x) = \frac{1-|x|}{2}$ and the first finalist $w$ is at position $\phi(w) = 0$. Let $c$ be a candidate at position $\phi(c) = x$. 
\par 
First, consider the case $x \le d$, then $-d \le x-d \le 0$ and the score of $c$ is given by
\begin{align*}
S(c) =& S_V(c) - \alpha S_V(cw) \\
=&\int_{x-d}^{x+d} f(t)dt - \alpha \int_{x-d}^d f(t)dt \\
=& \frac{1}{2}((1-\alpha)\int_{x-d}^{0} (1+t)dt + \int_{0}^{x+d}(1-t)dt + \int_{0}^{d}(1-t)dt) \\
=& \frac{1}{2}((1-\alpha)(0-(x-d +\frac{(x-d)^2}{2})) + (x+d - \frac{(x+d)^2}{2})-0)+ C\\
=& \frac{1}{2}((1-\alpha)(-x - \frac{x^2 - 2xd + d^2}{2} + d) + x + d - \frac{x^2 + 2dx + d^2}{2}) + C\\
=& \frac{1}{2}(x^2( \frac{\alpha-1}{2} - \frac{1}{2}) + x((1-\alpha)(-1+d) + 1 - d)) + C\\
=& \frac{1}{2}(x^2\frac{\alpha-2}{2} + x \alpha( 1-d)) + C
\end{align*}
where $C$ is a constant independent of $x$. The maximum of this function is reached when $x^* = \frac{\alpha(1-d)}{2-\alpha}$, which is the optimal position. Note that $x^* \le d$ if and only if $\alpha(1-d) \le (2-\alpha)d$, and therefore $\alpha \le 2d$. This gives us the first formula.
\par 
Let's now assume that $2d \ge x \ge d$. The score of $c$ is now given by
\begin{align*}
S(c) =& S_V(c) - \alpha S_V(cw) \\
&= \int_{x-d}^{x+d} f(t)dt - \alpha \int_{x-d}^d f(t)dt \\
&= (1-\alpha)\int_{x-d}^{d} (1-t)dt + \int_{d}^{x+d} (1-t)dt \\
&= \frac{1}{2}((1-\alpha)((d -\frac{d^2}{2})- (x-d - \frac{(x-d)^2}{2}))+ (x+d - \frac{(x+d)^2}{2} - (d - \frac{d^2}{2}))) \\
&= \frac{1}{2}(x^2(\frac{(1-\alpha)}{2} - \frac{1}{2}) + x( (1-\alpha)(-1  - d) + 1 - d)) + C \\
&= \frac{1}{2}(x^2\frac{-\alpha}{2} + x(\alpha(1+d) - 2d)) + C
\end{align*}
where $C$ is a constant independent of $x$. The maximum of this function is reached when $x^* = \frac{\alpha(1+d) - 2d}{\alpha} = 1 + d - \frac{2d}{\alpha}$. Note that $d \le x^* \le 2d$ if and only if $0 \le 1 - \frac{2d}{\alpha} \le d$ so $\frac{2d}{1-d} \ge \alpha \ge 2d$. This gives the second formula.
\par 
Finally, if $x^* \ge 2d$ then $S(x) = \int_{x-d}^{x+d}f(t)dt$ and since $f(t)$ is decreasing, this is maximum when $x^* = 2d$. Therefore, for $\alpha > \frac{2d}{1-d}$ we have $x^* = 2d$. This gives the third formula and concludes the proof.
\end{proof}

This optimal distance to the first finalist is depicted on Figure \ref{fig:results-0}. This figure clearly shows that for rules close to MAV, the second finalist is quite centrist, and the closer we are to S-CCAV, the more extreme is the second finalist. This being said, $\alpha$-seqAV rules starts to be equivalent to S-CCAV before reaching $\alpha = 1$ when $d < \frac{1}{3}$.

\subsection{Gaussian distribution}
Now, we assume that we have a Gaussian distribution of voters, centered at $0$ and with standard deviation $1/2$, as shown in Figure \ref{fig:distrib-1}. We cannot compute the exact optimum for each $d$ and $\alpha$ so we used simulations on
synthetic data instead. We sampled $20,000$ voters and $1,000$ candidates and as before, every voter approves candidates that are at distance $\le d$. For each $d$ and $\alpha$ we compute the two finalists and observed their positions on the line. The first selected finalist is always the closest to the center, and Figure \ref{fig:results-1} shows the distance between the two finalists for $\alpha$-seqAV rules with $\alpha \in [0,1]$ and various $d$. We observe that Figure \ref{fig:results-1} is very similar to Figure \ref{fig:results-0}.

\section{Experiments}
\label{experiments}

Finally, we want to compare the different rules on real data.
We used approval ballot datasets from different sources\footnote{The last two datasets are available on \url{www.preflib.org}}:
\begin{itemize}
    \item Datasets from the poll "Voter Autrement" conducted in several cities
    during the 2017 French presidential election \citep{bouveret:halshs-02379941}. Each dataset contains approval ballots of around $1000$ voters for the $11$ running candidates.
   
    \item A dataset from the 2002 French presidential election, which had $16$ candidates. 
    \item Two datasets 
    of the San Sebastian Poster Competition held during The Summer School on Computational Social Choice ($17$ candidates,  around $60$ voters per dataset).
\end{itemize} 
Table \ref{tab:dataset} summarizes our different datasets.
\begin{table}[!t]
    \centering
    \begin{tabular}{|c|c|c|} \hline 
        Name & Candidates & Voters  \\\hline
        2017-Strasbourg & $11$ & $1055$ \\\hline
        2017-HSC & $11$ & $701$\\\hline 
        2017-Grenoble & $11$ & $1048$\\\hline
        2017-Crolles-1 & $11$ & $1291$\\\hline
        2017-Crolles-2 & $11$ & $1269$ \\\hline
        2002-Presidential & $16$ & $2597$ \\\hline
        Best-Poster-A & $17$ & $65$ \\\hline
        Best-Poster-B & $17$ & $59$ \\\hline
    \end{tabular}
    \caption{The datasets we used. For 2017-Crolles, half of the voters responded thinking there will be no runoff (2017-Crolles-1), and the other half thinking there will be a runoff (2017-Crolles-2).}
    \label{tab:dataset}
\end{table}
\par 
The first thing we did was to debiaised the results of the 2017 datasets. Indeed, the pollsters asked respondents who they actually voted in the election, and the distribution of votes is very different between the poll respondents and the global results. For instance, in the Strasbourg dataset, $35\%$ of the voters indicated that they voted JLM in the election, against $20\%$ nationwide. Similarly, $3\%$ of the voters in this poll indicated they voted MLP, against $21\%$ nationwide. 
\par 
After this debiaising step, we can look at the approval score of candidates in the election. For instance, is there a candidate with more than $50\%$ approvals? Table \ref{tab:approval_scores17} and \ref{tab:approval_scores02} summarize the approval scores of the main candidates in the different datasets of 2002 and 2017 elections. We can see that it is actually tight between the different candidates.
\begin{table}[!t]
    \centering
    \begin{tabular}{|l|c|c|c|c|c|}
    \hline
        & \meluche & \hamon & \macron & \fillon & \lepen  \\ \hline \hline
        Real &  $ 19.6\%$ & $6.4\%$& $24\%$ & $20\%$ & $21.3\%$\\ \hline \hline
        Strasbourg & $45.5\%$& $37.5\%$ & $46.3\%$& $26.7\%$& $25.8\%$ \\ \hline
        HSC & $ 41.2\%$ & $36.6\%$ & $45.6\%$ & $29.4\%$ & $28.7\% $ \\ \hline
        Grenoble  & $41.8\%$ & $ 43.2\%$ & $42.3\%$ & $33.2\%$& $29.5\%$  \\ \hline
        Crolles-1 & $40.5\%$& $36.3\%$& $54.7\%$& $34\%$& $30.3\%$  \\ \hline
        Crolles-2  & $43.6\%$& $40.7\%$& $55.3\%$& $38.7\%$& $31.4\%$ \\ \hline
    \end{tabular}
    \caption{Approval scores of candidates in the 2017 election datasets}
    \label{tab:approval_scores17}
\end{table}
\begin{table}[!t]
    \centering
    \begin{tabular}{|l|c|c|c|c|c|c|}
    \hline
        & \jospin & \green & \bayrou & \chirac & \lepen  \\ \hline \hline
        Real &  $ 16.2\%$ & $5.2\%$& $6.8\%$ & $19.9\%$ & $16.9\%$\\ \hline
        Approvals & $40.5\%$& $28.8\%$ & $33.4\%$& $36.4\%$& $14.6\%$ \\ \hline
    \end{tabular}
    \caption{Approval scores of candidates in the 2002 election dataset}
    \label{tab:approval_scores02}
\end{table}
\par 
We also played with the distribution of the approval ballots to understand the affinity network of the candidates. Two candidates are closer if voters that approve one of them tend to approve the second one. We compute the affinity between two candidates with the Jaccard index, i.e. we divide the number of voters who approved both candidates by the number of voters who approved at least one of them. Figure \ref{fig:network-grenoble} shows the affinity network for the 2017 presidential election in Grenoble. The size of a node is proportional to its approval score. We can clearly see an group with candidates from the left (in red), and another one with candidates from the right (in blue), EM being somewhat in the middle of the two groups. The affinity networks of all datasets can be found in the Appendix
\begin{figure}[!t]
    \centering
    \includegraphics[width=0.9\textwidth]{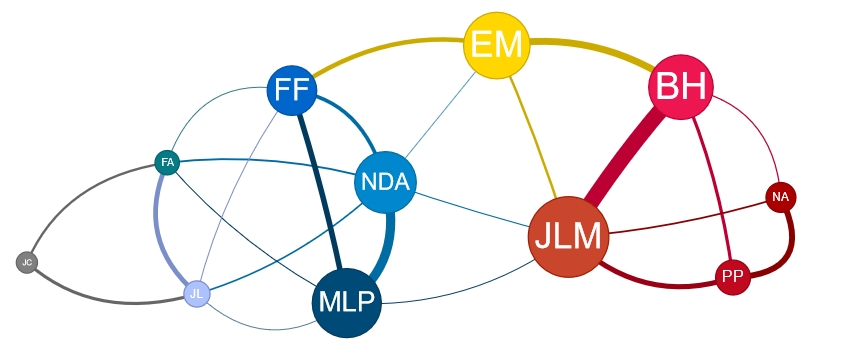}
    \caption{Affinity network for the dataset 2017-Grenoble with a threshold at 10\%, i.e. the Jacquard index must be $> 0.1$ for the edge to be visible}
    \label{fig:network-grenoble}
\end{figure}

\par 
We ran the different rules presented in Section \ref{sec:rules} on our datasets. Table \ref{tab:result_unbiaised} summarizes the finalists obtained for each dataset and each rule. For some datasets, the choice of the rule has a strong impact on the finalists, which suggests that the choice of the ABC rule should be made with care.
\begin{table}[!t]
    \centering
    \begin{tabular}{|c|c|c|c|c|c|c|c|c} \hline
         Rule &  SAV & CCAV & S-CCAV & PAV & S-PAV& S-Phr & SAV\\\hline
         2017-Strasbourg & \macron/\meluche   & \macron/\meluche &  \macron/ \meluche  &  \macron/\meluche& \macron/\meluche  &\macron/ \meluche &\hamon/ \meluche \\\hline
         2017-Grenoble & \hamon/\macron   & \hamon/\fillon & \hamon/\fillon  &  \macron/\meluche& \macron/\hamon  &\fillon/ \hamon &\macron/ \hamon \\\hline
         2017-HSC & \macron/\meluche   & \macron/\meluche &  \macron/ \meluche  &  \macron/\meluche& \macron/\meluche  &\macron/ \meluche &\macron/ \meluche \\\hline
         2017-Crolles-1 & \macron/\meluche   & \macron/\lepen &  \macron/ \lepen  &  \macron/\meluche& \macron/\meluche  &\macron/ \meluche &\macron/ \meluche \\\hline
         2017-Crolles-2 &\macron/\meluche   & \macron/\lepen &  \macron/ \lepen  &  \macron/\meluche& \macron/\meluche  &\macron/ \meluche &\macron/ \meluche \\\hline\hline
         2002-Presidential &\jospin/\chirac   & \jospin/\chirac & \jospin/\chirac   &  \jospin/\chirac & \jospin/\chirac  &\jospin/\chirac  &\jospin/\chirac  \\\hline\hline
         Best-Poster-A &$\#1$/$\#2$   & $\#1$/$\#6$ &$\#1$/$\#6$   &  $\#1$/$\#4$ & $\#1$/$\#4$ &$\#1$/$\#4$  &$\#1$/$\#2$ \\ \hline
         Best-Poster-B &$\#1$/$\#2$   & $\#1$/$\#2$ &$\#1$/$\#2$   & $\#1$/$\#2$ & $\#1$/$\#2$&$\#1$/$\#2$ &$\#1$/$\#2$
         \\\hline
    \end{tabular}
    \caption{Finalists with different approval rules on all our datasets}
    \label{tab:result_unbiaised}
\end{table}
\par
We can also look at the evolution of the finalists with $\alpha$-AV and $\alpha$-seqAV rules when $\alpha$ varies from $0$ to $1$. This gives us a spectrum of rules from MAV to (S-)CCAV and we can see how the results evolve between the extremes. For MAV, the finalists are the two candidates from the strongest group, and for CCAV, they are from two very different groups.
\begin{figure*}[!t]
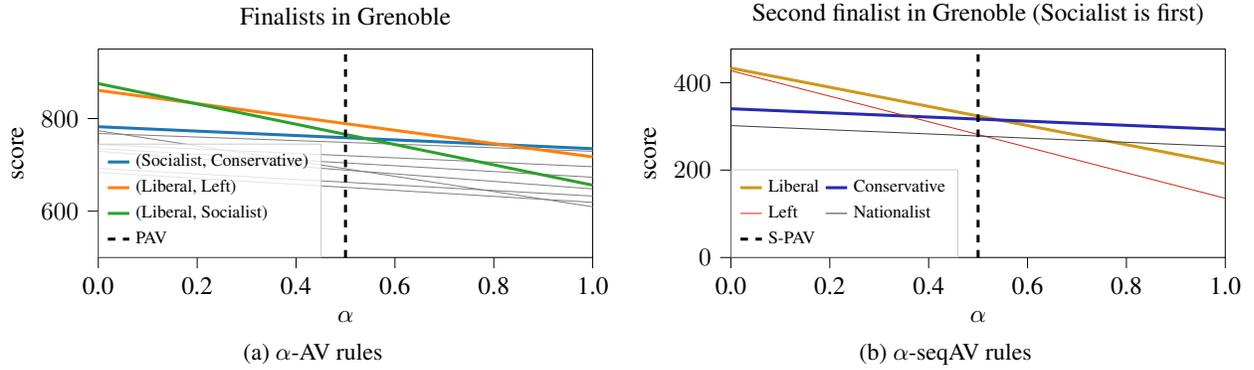

    \centering
    \subfloat[$\alpha$-AV rules\label{fig:spectrum2grenoble}]{
    \scalebox{0.95}{\input{tikz/spectrum_2_Grenoble_main}}
    }
    \subfloat[$\alpha$-seqAV rules\label{fig:spectrum1grenoble}]{
    \scalebox{0.95}{\input{tikz/spectrum_1_Grenoble_main}}
    }
    \caption{The two finalists in 2017-Grenoble Dataset for $\alpha$-AV and $\alpha$-seqAV rules with $\alpha \in [0,1]$. For every $\alpha$, the upper line indicates the finalists (resp. the second finalist) for the $\alpha$-AV (resp. $\alpha$-seqAV) rule.}
    \label{fig:spectrumGrenoble}
\end{figure*}
\par 
Figure \ref{fig:spectrum2grenoble} depicts for instance the evolution of the pair of finalists for $\alpha$-AV rules for the 2017-Grenoble Dataset. We can see that the pairs of finalists change twice and involve $4$ different candidates (Liberal, Left, Socialist and Conservative). Pairs that are never selected are represented by grey lines.

\par 
Figure \ref{fig:spectrum1grenoble} depicts the evolution of the $\alpha$-seqAV score of candidates for second finalist spot in the 2017-Grenoble dataset. The approval winner is the Socialist candidate; the second finalist is either the Liberal or the Conservative.

\par 
More results and figures for these datasets, and for other datasets, are in the Appendix.

\section{Conclusion}
\label{conclu}

Our main message is that {\em approval with runoff} is not {\em one} rule but a {\em family} of rules, parameterized by the ABC rule chosen for determining the finalists. Our axiomatic and experimental results in Sections \ref{normative}, \ref{sp} and  \ref{experiments} show that this choice does make a big difference. If such rules have to be used in political elections, the choice of the ABC rule will be crucial, and is far from easy, but our results already give some useful elements: on a political election on a single-peaked profile on a classical left-right spectrum, we conclude from our theoretical and experimental results that: multiwinner approval voter tends to select two centrist finalists (possibly clones of each other); Chamberlin-Courant tends to select a left-wing and a right-wing candidate; proportional approval voting is inbetween and tends to select a center-left and/or a center-right candidate; greedy versions of the latter two rules select the candidate with the largest number of approvals, and a left-wing or right-wing candidate. 

An important question is, will citizens understand and accept such rules especially in comparison with plurality with runoff and standard (single-winner) approval voting? Will there be a difference between citizens used to runoff voting in their country and those who are not? 

\bibliographystyle{unsrtnat}
\bibliography{biblio}

\begin{thebibliography}{12}
\providecommand{\natexlab}[1]{#1}
\providecommand{\url}[1]{\texttt{#1}}
\expandafter\ifx\csname urlstyle\endcsname\relax
  \providecommand{\doi}[1]{doi: #1}\else
  \providecommand{\doi}{doi: \begingroup \urlstyle{rm}\Url}\fi

\bibitem[Laslier and Van~der Straeten(2016)]{LaslierVDS16}
Jean-Fran\c{c}ois Laslier and Karine Van~der Straeten.
\newblock Strategic voting in multi-winners elections with approval balloting:
  a theory for large electorates.
\newblock \emph{Social Choice and Welfare}, 47\penalty0 (3):\penalty0 559--587,
  2016.

\bibitem[{Van der Straeten} et~al.(2018){Van der Straeten}, Lachat, and
  Laslier]{VanderStraetenetal18}
Karine {Van der Straeten}, Romain Lachat, and Jean-François Laslier.
\newblock Strategic voting in multi-winner elections with approval balloting:
  An application to the 2011 regional government election in {Zurich}.
\newblock In Laura~B. Stephenson, John~H. Aldrich, and André Blais, editors,
  \emph{The Many Faces of Strategic Voting. Tactical Behavior in Electoral
  Systems Around the World}, pages 178--202. The University of Michigan Press,
  2018.

\bibitem[Brams and Sanver(2009)]{BramsS09}
Steven~J. Brams and M.~Remzi Sanver.
\newblock Voting systems that combine approval and preference.
\newblock In \emph{The Mathematics of Preference, Choice and Order}, Studies in
  Choice and Welfare, pages 215--237. Springer, 2009.

\bibitem[Lackner and Skowron(2020)]{LacknerS20}
Martin Lackner and Piotr Skowron.
\newblock Approval-based committee voting: Axioms, algorithms, and
  applications.
\newblock \emph{CoRR}, abs/2007.01795, 2020.
\newblock URL \url{https://arxiv.org/abs/2007.01795}.

\bibitem[Sanver(2010)]{remzi2010}
M.~Remzi Sanver.
\newblock Approval as an intrinsic part of preference.
\newblock In \emph{{Handbook on Approval Voting}}, pages 469--481. Springer, 04
  2010.
\newblock \doi{10.1007/978-3-642-02839-7_20}.

\bibitem[Green-Armytage and Tideman(2020)]{Green-ArmytageT20}
James Green-Armytage and T.~Nicolaus Tideman.
\newblock Selecting the runoff pair.
\newblock \emph{Public Choice}, 182\penalty0 (1):\penalty0 119--137, 2020.

\bibitem[Lackner and Maly(2021)]{LacknerM21}
Martin Lackner and Jan Maly.
\newblock Approval-based shortlisting.
\newblock In \emph{{AAMAS}}, pages 737--745. {ACM}, 2021.

\bibitem[Peters(2021)]{DBLP:journals/corr/abs-2104-08594}
Dominik Peters.
\newblock Proportionality and strategyproofness in multiwinner elections.
\newblock \emph{CoRR}, abs/2104.08594, 2021.
\newblock URL \url{https://arxiv.org/abs/2104.08594}.

\bibitem[Chapman(1955)]{10.2307/2604342}
Brian Chapman.
\newblock {Political Parties: Their Organization and Activity in the Modern
  State}.
\newblock \emph{International Affairs}, 31\penalty0 (2):\penalty0 208--208, 04
  1955.
\newblock ISSN 0020-5850.
\newblock \doi{10.2307/2604342}.
\newblock URL \url{https://doi.org/10.2307/2604342}.

\bibitem[Cox(1984)]{Cox1984StrategicEC}
Gary~W. Cox.
\newblock Strategic electoral choice in multi-member districts: Approval voting
  in practice?
\newblock \emph{American Journal of Political Science}, 28:\penalty0 722, 1984.

\bibitem[Cox(1997)]{cox_1997}
Gary~W. Cox.
\newblock \emph{Making Votes Count: Strategic Coordination in the World's
  Electoral Systems}.
\newblock Political Economy of Institutions and Decisions. Cambridge University
  Press, 1997.
\newblock \doi{10.1017/CBO9781139174954}.

\bibitem[Bouveret et~al.(2019)Bouveret, Blanch, Baujard, Durand, Igersheim,
  Lang, Laruelle, Laslier, Lebon, and Merlin]{bouveret:halshs-02379941}
Sylvain Bouveret, Renaud Blanch, Antoinette Baujard, Fran{\c c}ois Durand,
  Herrade Igersheim, J{\'e}r{\^o}me Lang, Annick Laruelle, Jean-Fran{\c c}ois
  Laslier, Isabelle Lebon, and Vincent Merlin.
\newblock {Voter Autrement 2017 for the French Presidential Election}.
\newblock Technical Report, HAL halshs-02379941, November 2019.
\newblock URL \url{https://halshs.archives-ouvertes.fr/halshs-02379941}.

\end{thebibliography}
\newpage
\appendix
\section{Affinity networks}
\begin{figure}[!h]
    \centering
    \includegraphics[width=0.7\textwidth]{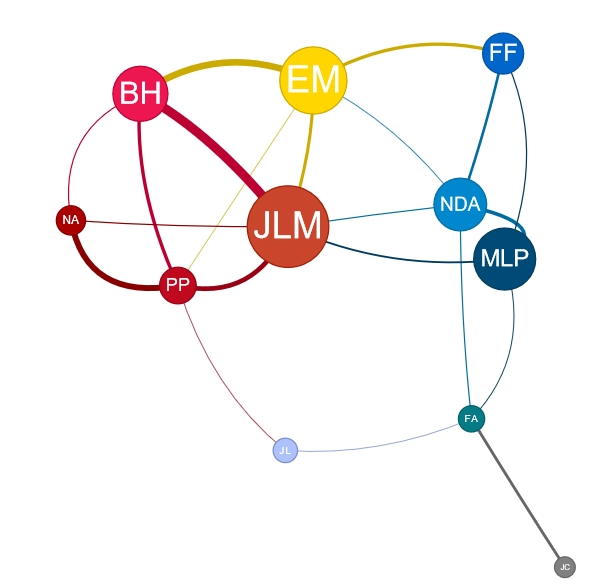}
    \caption{Affinity network for the dataset 2017-Strasbourg with a threshold at 10\%}
\end{figure}
\begin{figure}[!h]
    \centering
    \includegraphics[width=0.7\textwidth]{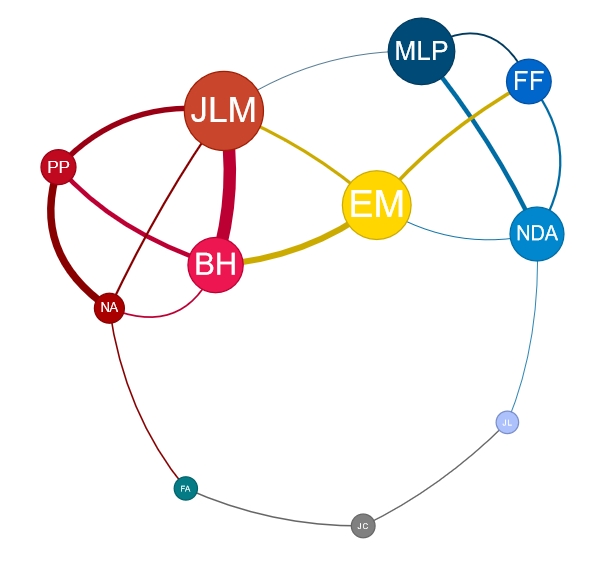}
    \caption{Affinity network for the dataset 2017-HSC with a threshold at 10\%}
\end{figure}
\begin{figure}[!h]
    \centering
    \includegraphics[width=0.7\textwidth]{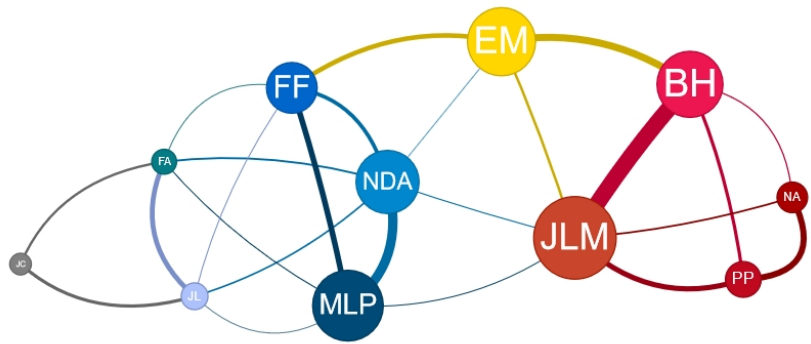}
    \caption{Affinity network for the dataset 2017-Grenoble with a threshold at 10\%}
\end{figure}
\begin{figure}[!h]
    \centering
    \includegraphics[width=0.7\textwidth]{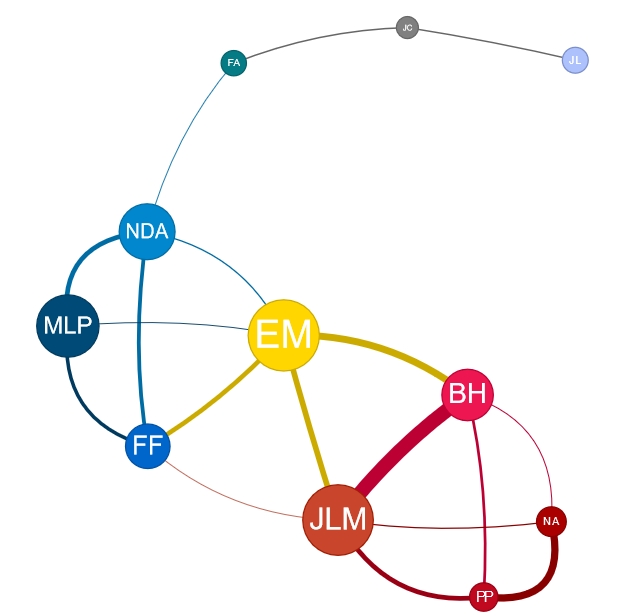}
    \caption{Affinity network for the dataset 2017-Crolles-1 with a threshold at 10\%}
\end{figure}
\begin{figure}[!h]
    \centering
    \includegraphics[width=0.7\textwidth]{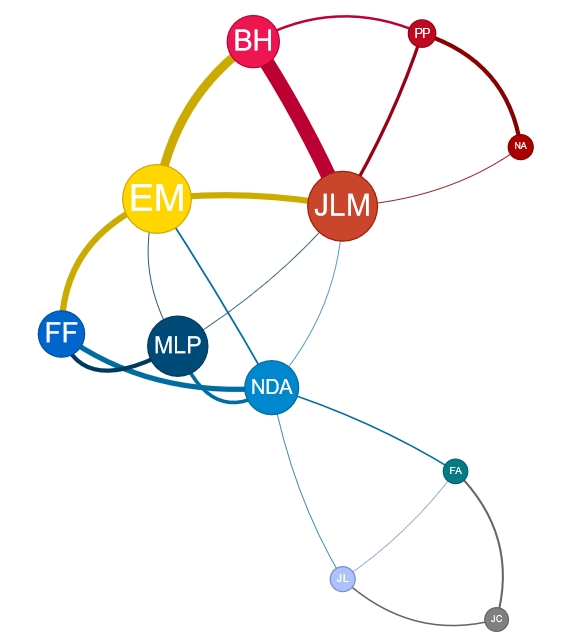}
    \caption{Affinity network for the dataset 2017-Crolles-2 with a threshold at 10\%}
\end{figure}
\begin{figure}[!h]
    \centering
    \includegraphics[width=0.7\textwidth]{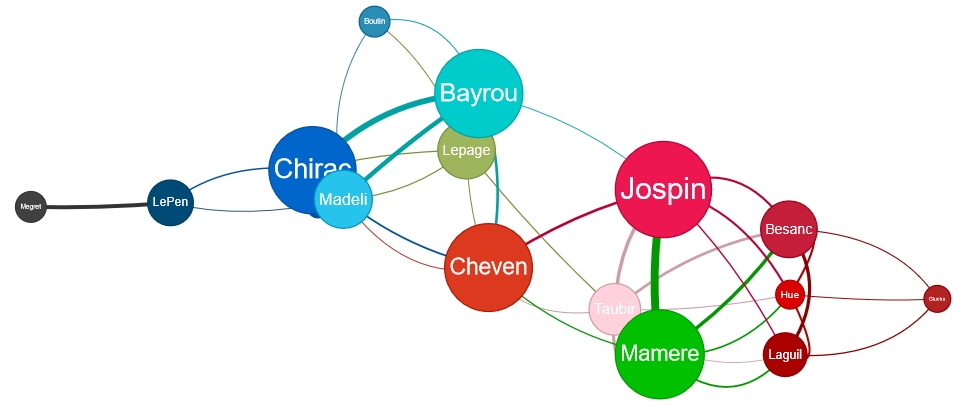}
    \caption{Affinity network for the dataset 2002-Presidential with a threshold at 12\%}
\end{figure}

\begin{figure}[!h]
    \centering
    \includegraphics[width=0.7\textwidth]{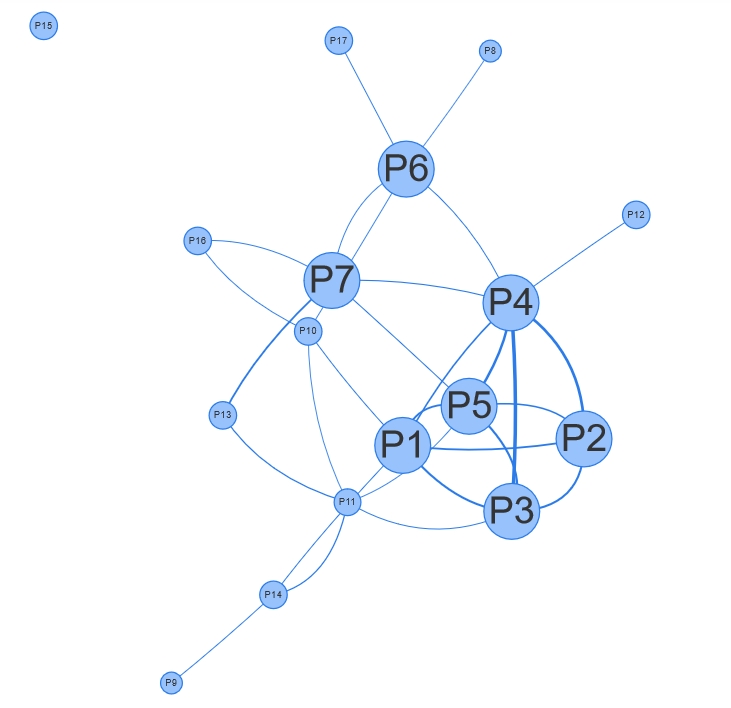}
    \caption{Affinity network for the dataset Poster-1 with a threshold at 30\%}
\end{figure}

\begin{figure}[!h]
    \centering
    \includegraphics[width=0.7\textwidth]{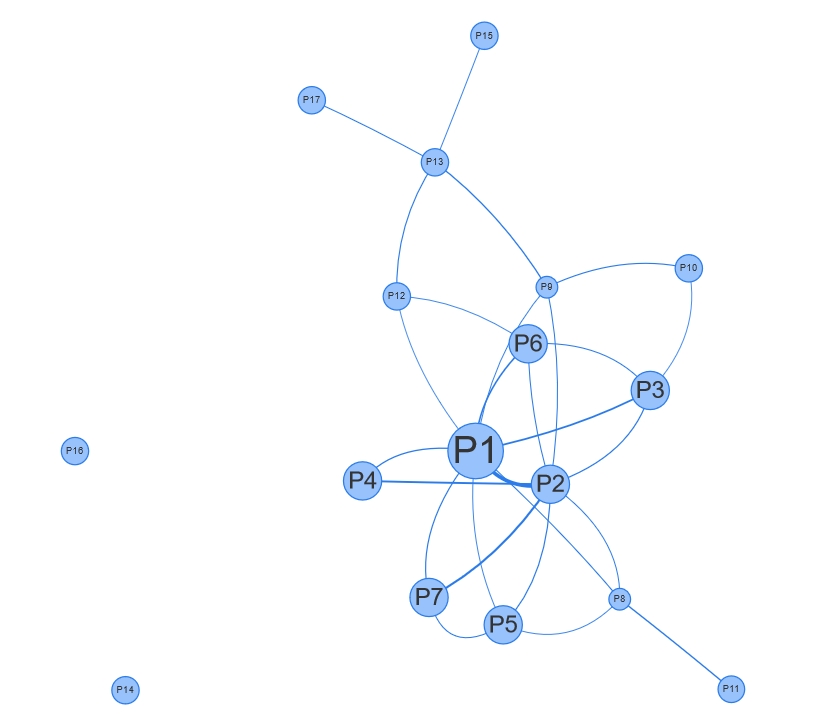}
    \caption{Affinity network for the dataset Poster-2 with a threshold at 30\%}
\end{figure}

\clearpage

\section{Results for all datasets}


\begin{figure*}[h]
    \centering
    \subfloat[$\alpha$-AV rules]{
    \input{tikz/spectrum_2_Grenoble}
    }
    \subfloat[$\alpha$-seqAV rules]{
    \input{tikz/spectrum_1_Grenoble}
    }
\caption{The two finalists in 2017-Grenoble Dataset for $\alpha$-AV and $\alpha$-seqAV rules with $\alpha \in [0,1]$. For every $\alpha$, the upper line indicates the finalists (resp. the second finalist) for the $\alpha$-AV (resp. $\alpha$-seqAV) rule.}
\label{fig:alpha_1}
\end{figure*}

\begin{figure*}[h]
    \centering
    \subfloat[$\alpha$-AV rules]{
    \input{tikz/spectrum_2_Crolles1}
    }
    \subfloat[$\alpha$-seqAV rules]{
    \input{tikz/spectrum_1_Crolles1}
    }
\caption{The two finalists in 2017-Crolles-1 Dataset for $\alpha$-AV and $\alpha$-seqAV rules with $\alpha \in [0,1]$. For every $\alpha$, the upper line indicates the finalists (resp. the second finalist) for the $\alpha$-AV (resp. $\alpha$-seqAV) rule.}
\end{figure*}

\begin{figure*}[p]
    \centering
    \subfloat[$\alpha$-AV rules]{
    \input{tikz/spectrum_2_Crolles2}
    }
    \subfloat[$\alpha$-seqAV rules]{
    \input{tikz/spectrum_1_Crolles2}
    }
\caption{The two finalists in 2017-Crolles-2 Dataset for $\alpha$-AV and $\alpha$-seqAV rules with $\alpha \in [0,1]$. For every $\alpha$, the upper line indicates the finalists (resp. the second finalist) for the $\alpha$-AV (resp. $\alpha$-seqAV) rule.}
\end{figure*}

\begin{figure*}[p]
    \centering
    \subfloat[$\alpha$-AV rules]{
    \input{tikz/spectrum_2_HSC}
    }
    \subfloat[$\alpha$-seqAV rules]{
    \input{tikz/spectrum_1_HSC}
    }
\caption{The two finalists in 2017-HSC Dataset for $\alpha$-AV and $\alpha$-seqAV rules with $\alpha \in [0,1]$. For every $\alpha$, the upper line indicates the finalists (resp. the second finalist) for the $\alpha$-AV (resp. $\alpha$-seqAV) rule.}
\end{figure*}

\begin{figure*}[p]
    \centering
    \subfloat[$\alpha$-AV rules]{
    \input{tikz/spectrum_2_Strasbourg}
    }
    \subfloat[$\alpha$-seqAV rules]{
    \input{tikz/spectrum_1_Strasbourg}
    }
\caption{The two finalists in 2017-Strasbourg Dataset for $\alpha$-AV and $\alpha$-seqAV rules with $\alpha \in [0,1]$. For every $\alpha$, the upper line indicates the finalists (resp. the second finalist) for the $\alpha$-AV (resp. $\alpha$-seqAV) rule.}
\end{figure*}

\begin{figure*}[p]
    \centering
    \subfloat[$\alpha$-AV rules]{
    \input{tikz/spectrum_AV2002_2}
    }
    \subfloat[$\alpha$-seqAV rules]{
    \input{tikz/spectrum_AV2002}
    }
\caption{The two finalists in the 2002-Presidential dataset for $\alpha$-AV and $\alpha$-seqAV rules with $\alpha \in [0,1]$. For every $\alpha$, the upper line indicates the finalists (resp. the second finalist) for the $\alpha$-AV (resp. $\alpha$-seqAV) rule.}
\end{figure*}

\begin{figure*}[p]
    \centering
    \subfloat[$\alpha$-AV rules]{
    \input{tikz/spectrum_poster_2_0}
    }
    \subfloat[$\alpha$-seqAV rules]{
    \input{tikz/spectrum_poster_0}
    }
\caption{The two finalists in the Best-Poster-A dataset for $\alpha$-AV and $\alpha$-seqAV rules with $\alpha \in [0,1]$. For every $\alpha$, the upper line indicates the finalists (resp. the second finalist) for the $\alpha$-AV (resp. $\alpha$-seqAV) rule.}
\end{figure*}

\begin{figure*}[p]
    \centering
    \subfloat[$\alpha$-AV rules]{
    \input{tikz/spectrum_poster_2_1}
    }
    \subfloat[$\alpha$-seqAV rules]{
    \input{tikz/spectrum_poster_1}
    }
\caption{The two finalists in the Best-Poster-B dataset for $\alpha$-AV and $\alpha$-seqAV rules with $\alpha \in [0,1]$. For every $\alpha$, the upper line indicates the finalists (resp. the second finalist) for the $\alpha$-AV (resp. $\alpha$-seqAV) rule.}
\label{fig:alpha_n}
\end{figure*}

\end{document}